\newcommand {\ONE} {\mathbbm{1}}
\DeclareMathOperator {\Var}   {Var}
\DeclareMathOperator {\cost}  {cost}
\newcommand {\set}   [1] {\left\{ #1 \right\}}
\newcommand {\brc}   [1] {\left(#1\right)}
\newcommand {\Exp}       {\mathbb{E}}
\newcommand {\Prob}  [1] {\Pr \brc{#1 }}
\newcommand {\E}     [1] {\Exp\left[#1\right]}
\newcommand {\bbR}    {\mathbb{R}}
\newcommand {\calC}    {{\cal{C}}}
\newcommand {\calE}    {{\cal{E}}}
\newcommand{\eps}{\varepsilon}
\newtheorem{theorem}{Theorem}[section]
\newtheorem{lemma}[theorem]{Lemma}
\newtheorem{claim}[theorem]{Claim}
\newtheorem{corollary}[theorem]{Corollary}
\newtheorem{definition}[theorem]{Definition}
\newtheorem{observation}[theorem]{Observation}
\title{Performance of Johnson--Lindenstrauss Transform for\\$k$-Means and $k$-Medians Clustering}
\author{Konstantin Makarychev \\ Northwestern University \and Yury Makarychev\thanks{Supported by NSF CCF-1718820 and NSF Career CCF-1150062.} \\ TTIC \and Ilya Razenshteyn \\ Microsoft Research}
\date{}
\begin{document}
\maketitle
\begin{abstract}
Consider an instance of Euclidean $k$-means or $k$-medians clustering. We show that the cost of the optimal solution is preserved up to
a factor of $(1+\varepsilon)$ under a projection onto a random $O(\log(k / \eps) / \eps^2)$-dimensional subspace.
Further, the cost of \textit{every} clustering is preserved within $(1+\varepsilon)$.
More generally, our result applies to any dimension reduction map satisfying a mild sub-Gaussian-tail condition.
Our bound on the dimension is nearly optimal. Additionally, our result applies to Euclidean
$k$-clustering with the distances raised to the $p$-th power for any constant $p$.

For $k$-means, our result resolves an open problem posed by Cohen, Elder, Musco, Musco, and Persu (STOC 2015);
for $k$-medians, it answers a question raised by Kannan.
\end{abstract}
\section{Introduction}

The Euclidean \emph{$k$-clustering problem} with the $\ell_p$-objective is
defined as follows. Given a dataset $X \subset \bbR^m$ of $n$ points, the
goal is to find a partition $\calC = \{C_1, C_2, \ldots, C_k\}$ of $X$ into
$k$ parts (clusters) that minimizes the following cost function:
$$
\cost_p \calC = \sum_{i=1}^k \min_{u_i \in \bbR^m} \sum_{x \in C_i} \|x - u_i\|^p,
$$
where $\|\cdot\|$ from now on denotes the Euclidean ($\ell_2$) norm, and the optimal points $u_i$
are called \emph{centers} of clusters $C_i$. This
problem is a generalization of the $k$-median ($p = 1$) and the $k$-means
($p = 2$) clustering.
Algorithms for $k$-clustering (especially the Lloyd's heuristic~\cite{lloyd1982least} for $k$-means)
are used in virtually every area of data science (see~\cite{jain2010data} for a survey),
for data compression, quantization, and transmission over noisy channels~\cite{farvardin1990study},
and for hashing, sketching and similarity search~\cite{jegou2011product}.
In this paper we study data-oblivious \emph{dimension reduction} for $k$-clustering, which can
be used to speed up clustering algorithms. This line of work has been initiated by
Boutsidis, Zouzias, and Drineas~\cite{boutsidis2010random} and prior to this work, the best bounds
were due to Cohen, Elder, Musco, Musco and Persu~\cite{cohen2015dimensionality}.
Before stating our results, let us briefly recall the notion of Euclidean
dimension reduction (see~\cite{naor2018metric} for a broad overview of the area).

\paragraph{Dimension reduction.} The cornerstone
dimension reduction statement for the Euclidean distance is the Johnson--Lindenstrauss Lemma~\cite{JL84}.
For positive reals $p, q, \eps$, we write $p \approx_{1 + \eps} q$ if
$\frac{1}{1 + \eps} \cdot p \leq q \leq (1 + \eps) \cdot p$.
\begin{theorem}[\cite{JL84}]
\label{distributional_jl}
There exists a family of random linear maps $\pi_{m,d} \colon \bbR^m \to \bbR^d$ with the following properties.
For every $m \geq 1$, $\eps, \delta \in (0; 1/2)$ and all $x \in \bbR^m$, we have
\begin{equation*}
\Pr_{\pi \sim \pi_{m,d}}\left(\|\pi x\| \approx_{1+\eps} \|x\|\right) \geq 1 - \delta,
\end{equation*}
where $d = O\left(\frac{\log(1 / \delta)}{\eps^2}\right)$.
\end{theorem}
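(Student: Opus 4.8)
The plan is to instantiate $\pi_{m,d}$ as a scaled Gaussian matrix and reduce the whole statement to the concentration of a single $\chi^2$ random variable. Concretely, I would let $\pi_{m,d}$ be the distribution of the linear map $x \mapsto \tfrac{1}{\sqrt{d}}\,Gx$, where $G \in \bbR^{d\times m}$ has i.i.d.\ standard Gaussian entries. Fix $\eps,\delta$ and $x \in \bbR^m$. Since $\|\pi x\|$ and $\|x\|$ are both homogeneous of degree $1$ in $x$, we may assume $\|x\| = 1$; and since the standard Gaussian measure on $\bbR^d$ is rotationally invariant, the law of $Gx$ depends only on $\|x\|$, so we may take $x = e_1$. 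Then $Gx$ is the first column of $G$, a vector of $d$ i.i.d.\ $N(0,1)$ coordinates, so $Y := d\,\|\pi x\|^2 = \sum_{i=1}^d g_{i1}^2$ has the $\chi^2_d$ distribution with $\E{Y} = d$.

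Next I would observe that it suffices to prove the two-sided bound $\Prob{\,|Y-d| > \eps d\,} \le \delta$. Indeed, on the complementary event $\|\pi x\|^2 \in [1-\eps,\,1+\eps]$, and for $\eps \in (0,1/2)$ one checks $(1-\eps)(1+\eps)^2 = 1+\eps-\eps^2-\eps^3 \ge 1$ and $1+\eps \le (1+\eps)^2$, hence $\tfrac{1}{1+\eps} \le \|\pi x\| \le 1+\eps$, i.e.\ $\|\pi x\| \approx_{1+\eps} \|x\|$. (If cleaner constants are desired, one simply runs the argument with $\eps' = \eps/3$.)

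The crux is the sub-exponential tail of $\chi^2_d$, which I would derive from the moment generating function $\E{e^{tY}} = (1-2t)^{-d/2}$, valid for $t<1/2$. Applying Markov's inequality to $e^{tY}$ with the optimal choice $t = \tfrac{\eps}{2(1+\eps)}\in(0,1/2)$ gives $\Prob{Y \ge (1+\eps)d} \le \exp\brc{-\tfrac{d}{2}\brc{\eps-\ln(1+\eps)}} \le e^{-d\eps^2/6}$, using $\eps-\ln(1+\eps) \ge \tfrac{\eps^2}{2}-\tfrac{\eps^3}{3} \ge \eps^2/3$ for $\eps\in(0,1/2)$; the symmetric computation applied to $e^{-sY}$ with $s = \tfrac{\eps}{2(1-\eps)}$ yields $\Prob{Y \le (1-\eps)d} \le \exp\brc{-\tfrac{d}{2}\brc{-\eps-\ln(1-\eps)}} \le e^{-d\eps^2/4}$ since $-\eps-\ln(1-\eps) \ge \eps^2/2$. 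Hence $\Prob{\,|Y-d|>\eps d\,} \le 2e^{-d\eps^2/6}$.

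Finally I would fix the dimension: taking $d = \roundup{6\log(2/\delta)/\eps^2}$ forces $2e^{-d\eps^2/6} \le \delta$, and since $\delta\in(0,1/2)$ gives $\log(2/\delta) \le 2\log(1/\delta)$, this is $d = O\brc{\log(1/\delta)/\eps^2}$, as required. I do not anticipate a genuine obstacle: the only nontrivial ingredient is the $\chi^2$ tail estimate, and the Gaussian construction makes it painless because exact rotational invariance eliminates all dependence on the direction of $x$, reducing everything to one scalar variable. (An alternative is to take $G$ with i.i.d.\ Rademacher or sparse entries, but then one must replace the elementary $\chi^2$ bound with a Hanson--Wright-type inequality.)
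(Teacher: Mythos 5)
Your proof is correct, and it is the standard one; note, though, that the paper does not actually prove Theorem~\ref{distributional_jl} --- it is imported verbatim from~\cite{JL84} as a hypothesis the dimension-reduction family must satisfy, so there is no internal proof to compare against. That said, your argument lines up naturally with what the paper does prove in Claim~\ref{claim:gaussian-is-standard}: both reduce to the scalar variable $\|\pi x\|^2$, observe that $d\|\pi x\|^2$ is $\chi^2_d$, and then bound its tail. The difference is in how the tail bound is obtained: the paper cites the Laurent--Massart inequality (and only needs the one-sided upper tail, since that is what Definition~\ref{def_standard} asks for), whereas you derive a two-sided bound from scratch by optimizing the Chernoff/MGF bound $\E{e^{tY}}=(1-2t)^{-d/2}$. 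Your derivation is more self-contained and gives both tails, which is exactly what the full JL statement requires; the paper's route is shorter but outsources the concentration inequality. Your bookkeeping checks out: $(1-\eps)(1+\eps)^2 = 1+\eps-\eps^2-\eps^3\geq 1$ for $\eps\in(0,1/2)$, the optimal tilting parameters $t=\eps/(2(1+\eps))$ and $s=\eps/(2(1-\eps))$ are correct, the elementary bounds $\eps-\ln(1+\eps)\geq\eps^2/3$ and $-\eps-\ln(1-\eps)\geq\eps^2/2$ hold on $(0,1/2)$, and $d=\roundup{6\log(2/\delta)/\eps^2}=O(\log(1/\delta)/\eps^2)$ for $\delta<1/2$.
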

A straightforward corollary is that one is able to embed any $n$-point subset of a Euclidean space into an $O\left(\frac{\log n}{\eps^2}\right)$-dimensional space, while preserving all of the pairwise distances
up to $(1 + \eps)$. This bound is known to be tight~\cite{alon2003problems, larsen2017optimality}. The attractive feature of the dimension
reduction procedure given by Theorem~\ref{distributional_jl} is that it is \emph{data-oblivious} i.e., the distribution over linear maps is independent of
the set of points we apply it to.

There are several constructions of families of random maps $\pi_{m,d}$ that satisfy Theorem~\ref{distributional_jl}:
projections on a random subspace~\cite{JL84,dasgupta2003elementary} and maps given by matrices with i.i.d.\ Gaussian and sub-Gaussian entries~\cite{indyk1998approximate,achlioptas2003database,klartag2005empirical}.
All of these constructions satisfy a certain additional condition, which we will need later.

\begin{definition}
\label{def_standard}
A family of random linear maps $\pi_{m,d} \colon \bbR^m \to \bbR^d$
is called \emph{sub-Gaussian-tailed} if for every unit vector $x \in \bbR^m$ and every $t > 0$, one has:
$$
\Pr_{\pi \sim \pi_{m,d}}(\|\pi x\| \geq 1 + t) \leq e^{-\Omega(t^2 d)}.
$$
\end{definition}

\paragraph{Our result.}
Consider an instance of Euclidean $k$-clustering. We show that its cost is preserved up to
a factor of $(1+\varepsilon)$ under dimension-reduction projection into an $O(\log(k / \eps) / \eps^2)$-dimensional space.
Further, the cost of \textit{every} clustering is preserved within a factor of $(1+\varepsilon)$.
Our result applies to dimension reductions based on orthogonal and Gaussian projections and, more generally,
any dimension reductions satisfying the sub-Gaussian-tail condition (in the sense of Definition~\ref{def_standard}).
Our bound on the dimension is nearly optimal.

For $k$-means, our result resolves an open problem posed by Cohen, Elder, Musco, Musco, and Persu~\cite{cohen2015dimensionality}.
For a partition $\calC = (C_1, C_2, \ldots, C_k)$
of $X \subset \bbR^m$ and a linear map $\pi \colon \bbR^m \to \bbR^d$,
we denote by $\pi(\calC)$ the respective partition of the image of $X$ under $\pi$.
We now state our main result formally.
\begin{theorem}
\label{main_th_informal}
Consider any family of random maps $\pi_{m,d} \colon \bbR^m \to \bbR^d$ that satisfies Theorem~\ref{distributional_jl}
and is sub-Gaussian-tailed (satisfies Definition~\ref{def_standard}).
Then for every $m \geq 1$, $\eps, \delta \in (0; 1/4)$ and $p \geq 1$,
the following holds. For every finite $X \subset \bbR^m$ we have
$$
\Pr_{\pi \sim \pi_{m,d}}\left(\cost_p \calC \approx_{1+\eps} \cost_p \pi(\calC) \mbox{ for all partitions } \calC = (C_1, C_2, \ldots, C_k) \mbox{ of $X$} \right) \geq 1 - \delta,
$$
where
\begin{equation}
\label{dim_bound}
d = O\left(\frac{p^4 \cdot \log \frac{k}{\eps \delta}}{\eps^2}\right).
\end{equation}
\end{theorem}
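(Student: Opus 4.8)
The plan is to prove both inequalities $\cost_p\pi(\calC)\le(1+\eps)\cost_p\calC$ and $\cost_p\pi(\calC)\ge(1+\eps)^{-1}\cost_p\calC$, for all partitions at once, by reducing them to an estimate for a single cluster and a single center. For the upper bound, if $u_i$ is an optimal center of $C_i$ in $\bbR^m$, then using $\pi(u_i)$ as the center of $\pi(C_i)$ gives $\cost_p\pi(\calC)\le\sum_i\sum_{x\in C_i}\|\pi(x-u_i)\|^p$, so it suffices that $\sum_{x\in C}\|\pi(x-w)\|^p\le(1+\eps)\sum_{x\in C}\|x-w\|^p$ for every $C\subseteq X$ and its optimal center $w$. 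For the lower bound, an optimal center of $\pi(C_i)$ may be assumed to lie in $\pi(\operatorname{aff}(C_i))$ — projecting onto that affine subspace never increases the cost, by Pythagoras — hence equals $\pi(w_i)$ for some $w_i\in\operatorname{aff}(C_i)$, and then $\cost_p\pi(\calC)=\sum_i\sum_{x\in C_i}\|\pi(x-w_i)\|^p\ge(1+\eps)^{-1}\sum_i\sum_{x\in C_i}\|x-w_i\|^p\ge(1+\eps)^{-1}\cost_p\calC$, using $\sum_{x\in C_i}\|x-w_i\|^p\ge\min_u\sum_{x\in C_i}\|x-u\|^p$. So everything follows once we know that, with probability at least $1-\delta$, the two-sided estimate $\sum_{x\in C}\|\pi(x-w)\|^p\approx_{1+\eps}\sum_{x\in C}\|x-w\|^p$ holds for all $C\subseteq X$ and all $w$ in the affine hull of $C$; one then sums over the $\le k$ clusters of a partition, and a sum of quantities each preserved to within $(1+\eps)$ is preserved to within $(1+\eps)$.

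For a single pair $(C,w)$, translate so $w=0$ and split $C$ into a ``bulk'' $B$ of points with $\|x\|$ small and, at $O\!\left(\log\frac{k}{\eps\delta}\right)$ geometric scales, a few ``outlier'' points far from the center; the elementary inequality $|\{x\in C:\|x\|\ge r\}|\le\bigl(\sum_{x\in C}\|x\|^p\bigr)/r^p$ shows each scale contains only a few points. On the bulk one uses that $\Exp\|\pi v\|^p=(1\pm\eps/10)\|v\|^p$ once $d$ is a large enough multiple of $p^4/\eps^2$, together with a sub-exponential concentration bound for the sum $\sum_{x\in B}\|\pi x\|^p$ of dependent but sub-Gaussian-tailed terms: since no single bulk term is more than a $\operatorname{poly}(\eps/k)$ fraction of $\sum_{x\in B}\|x\|^p$, this sum lies within $(1\pm\eps/4)$ of its mean except with probability $e^{-\Omega(\eps^2 d/p^4)}$. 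This bulk estimate is exactly where the argument does better than preserving individual distances. For the outliers it is enough to preserve the few norms $\|\pi(x-w)\|$ at the $O\!\left(\log\frac{k}{\eps}\right)$ relevant scales, which is again a Johnson--Lindenstrauss-type event failing with probability $e^{-\Omega(\eps^2 d/p^4)}$.

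The hard part — and the step responsible for the bound $\log k$ rather than $\log n$ — is that a union bound over all pairs $(C,w)$, or even over all subsets $C$, would only give $d=O\!\left(p^4\log\frac{n}{\delta}/\eps^2\right)$, since for a two-point cluster the single-cluster estimate is literally a Johnson--Lindenstrauss event for that pair. One must show that across all partitions only $\operatorname{poly}\!\left(\frac{k}{\eps\delta}\right)$ of these events genuinely matter. The key structural fact is that a cluster/center pair can move a partition's cost by more than a $(1+\eps)$ factor only if that cluster carries an $\Omega(\eps/k)$ fraction of the cost, which, together with the cheap clustering of the remaining points that must then exist, forces $X$ to be covered by $k$ balls whose radii are controlled in terms of the scale at hand; consequently every outlier point and every center that matters can be replaced — changing the relevant $p$-th powers by at most $(1+\eps)$ — by an element of a net built from $O(k)$ representatives of $X$ over $O\!\left(\log\frac{k}{\eps}\right)$ scales, a net of size $\operatorname{poly}\!\left(\frac{k}{\eps\delta}\right)$. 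A union bound over this family, each event failing with probability $e^{-\Omega(\eps^2 d/p^4)}$, combined with the reduction of the first paragraph, yields $d=O\!\left(p^4\log\frac{k}{\eps\delta}/\eps^2\right)$; the factor $p^4$ absorbs both the cost of controlling the $p$-th moment $\Exp\|\pi v\|^p$ and that of passing from the sub-Gaussian tail of $\|\pi v\|$ to concentration of $\|\pi v\|^p$ and of sums of such terms.
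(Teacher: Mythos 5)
Your reduction in the first paragraph is a sound sufficient condition but is too strong to be provable with $d=O(\log(k/\eps\delta)/\eps^2)$: applied to $C=\{x,y\}$ with $w$ the ($\ell_p$-optimal, hence on the segment $[x,y]$) center, the single-cluster estimate is literally the requirement $\|\pi(x-y)\|\le(1+\eps)^{1/p}\|x-y\|$, and demanding this for all pairs in $X$ forces $d=\Omega(\log n/\eps^2)$ by the tightness of Johnson--Lindenstrauss. You acknowledge this obliquely in the third paragraph, but the claimed fix does not close the gap. The assertion that only a $\operatorname{poly}(k/\eps\delta)$-size family of cluster/center pairs ``matters'' would need a proof, and the net you propose only discretizes centers and outliers; it says nothing about the \emph{bulk} of a cluster, which is an arbitrary subset of $X$ chosen by the adversary after seeing $\pi$. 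The bulk concentration in your second paragraph is an event for a \emph{fixed} set $B$, whereas the theorem requires it uniformly over all (exponentially many) candidate bulks, and sub-Gaussian tails of the individual terms $\|\pi x\|^p$ do not by themselves give concentration of $\sum_{x\in B}\|\pi x\|^p$ once the adversary can pick $B$ after the coin flips. Making any of this uniform over adversarial subsets is exactly the problem the paper's ``non-distorted core'' machinery is built to solve, and your sketch gives no alternative mechanism.

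For contrast, the paper takes a route that never attempts a union bound over clusters or a net of centers. It fixes a realization of $\pi$, looks at the worst clustering $\calC$ for that realization, and constructs (via a recursive measure-building lemma) a $\pi$-dependent ``non-distorted core'' $X'$: a subset such that every point of $X'$ has most of its within-cluster distances and all distances to the optimal centers $c_i^*$ preserved, while every fixed point is unlikely to be excluded. It then proves a \emph{robust one-point Kirszbraun extension} via a von Neumann minimax/duality argument to lift the optimal centers of the projected cores back to $\bbR^m$ with controlled error, and finally charges the excluded outliers by moving them to the optimal centers of $\calC^*$, paying only an $O(\theta)\cost_p\calC^*\le O(\theta)\cost_p\calC$ overhead. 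If you want to pursue your approach, you would need a precise lemma replacing the ``only $\operatorname{poly}(k/\eps\delta)$ pairs matter'' claim that simultaneously handles the adversarial composition of the bulk; as written, this is the missing idea.
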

In fact, we show that Theorem~\ref{main_th_informal} holds under a slightly more general definition of a
\emph{standard} dimension reduction map (Definition~\ref{def:dim-red-parameters}), which is implied by sub-Gaussian tails.

Theorem~\ref{main_th_informal} readily implies that if one solves the $k$-clustering problem after dimension reduction with approximation $\lambda \geq 1$, then the same solution yields approximation
$(1 + O(\eps)) \cdot \lambda$ for the original instance.
It is almost
immediate to show the guarantee given by Theorem~\ref{main_th_informal} for a \emph{fixed}
partition~$\calC$, however the total number of partitions is exponential,
and we cannot afford to take the union bound over them. In fact, the actual proof of
Theorem~\ref{main_th_informal}
is much more delicate.

For $k$-means ($p = 2$), the dimension bound $O(\log n / \eps^2)$ easily follows from
the Johnson--Lindenstrauss lemma (Theorem~\ref{distributional_jl})
and the fact that one can express the cost function of $k$-means via pairwise
distances. However, in many applications of clustering, $k \ll n$, in which case the bound we obtain is much stronger. For $p \ne 2$, even the weaker
$O(\log n)$ bound was not known before.

There are several known dimension reduction maps that essentially satisfy a variant of
Theorem~\ref{distributional_jl} (sometimes with somewhat worse bound on $d$)
but do not necessarily satisfy Definition~\ref{def_standard}: most notably, ``sparse'' constructions~\cite{dasgupta2010sparse,kane2014sparser} as well as ``fast'' constructions based on subsampled randomized Fourier-like transforms~\cite{ailon2006approximate,ailon2009fast,krahmer2011new,ailon2013almost,nelson2014new}.
To demonstrate versatility of our techniques, in Appendix~\ref{fast_jl_appendix} we show how to adjust the proof of Theorem~\ref{main_th_informal} to handle a variant of the ``fast'' map introduced in~\cite{ailon2006approximate} (yielding, roughly speaking, a quadratically worse bound on the target dimension $d$
than that given by Theorem~\ref{main_th_informal}).

Finally, let us note that the bound~(\ref{dim_bound}) is essentially optimal in the following two ways. First, if we
want to preserve the cost of all the $k$-clusterings of a dataset, then one needs at least $\Omega(\log k / \eps^2)$
dimensions for \emph{any} (possibly non-linear) dimension reduction map. Indeed, if we consider a dataset consisting of
$k + 1$ points, then preserving the cost of all $k$-clusterings is equivalent to preserving all the pairwise distances.
The desired lower bound then follows from the main result of~\cite{larsen2017optimality}. Second, if we use a Gaussian
matrix for dimension reduction, and only care about preserving the \emph{optimal} clustering, we still need
$\Omega(\log k / \eps^2)$ dimensions. Indeed, consider a set of $t$ pairs of points such that the distance within each
pair except one is $1$, the distance within the remaining one pair is $1 - C \eps$, where $C$ is a large enough
constant, and the pairs are very far apart from each other. Suppose that $k = 2t - 1$. Then the only approximately
optimal clustering consists of the ``special'' pair and the remaining points as singletons. But if we reduce dimension
using a Gaussian matrix to much fewer than $\log k / \eps^2$ dimensions, it is likely that some other pair will become
noticeably closer than the special pair. This changes the optimal clustering showing the desired lower bound.

\begin{figure}
    \centering
    \begin{tabular}{|c|c|c|}
    \hline
         Reference & Dimension $d$ & Distortion  \\
         \hline
         Folklore & $O(\log n / \eps^2)$ & $1 + \eps$\\
         \cite{boutsidis2010random} & $O(k / \eps^2)$ & $2 + \eps$\\
         \cite{cohen2015dimensionality} & $O(k / \eps^2)$ & $1 + \eps$ \\
         \cite{cohen2015dimensionality} & $O(\log k / \eps^2)$ & $9 + \eps$\\
         This work & $O(\log(k / \eps) / \eps^2)$ & $1 + \eps$\\
                  \hline
    \end{tabular}
    \caption{Data-oblivious dimension reduction for $k$-means}
    \label{prior_work_k_means}
\end{figure}

\paragraph{Related work.}
There is a large body of literature on dimension reduction for $k$-means
(which corresponds to $p = 2$).
Within this line of work, there are two kinds of results: data-oblivious and data-dependent.
Data-oblivious results provide guarantees qualitatively similar to our
Theorem~\ref{main_th_informal}
and are summarized in Figure~\ref{prior_work_k_means}.
Let us now give a brief overview.

As mentioned previously, the bound on the dimension $O(\log n / \eps^2)$ is a
simple application of Theorem~\ref{distributional_jl}. The first bound
independent of $n$ was obtained by Boutsidis, Zouzias and
Drineas~\cite{boutsidis2010random}, who showed that $O(k / \eps^2)$
dimensions are enough for distortion $2 + \eps$. The best bounds prior to the
present work are due to Cohen, Elder, Musco, Musco and
Persu~\cite{cohen2015dimensionality}. They showed two incomparable bounds:
$O(k / \eps^2)$ dimensions with distortion $1 + \eps$, and $O(\log k /
\eps^2)$ dimensions with distortion $9 + \eps$. The $9 + \eps$ bound on the
distortion follows from all pairwise distances between $k$ optimal
centers being approximately preserved. Our Theorem~\ref{main_th_informal}
improves upon both of the bounds shown in~\cite{cohen2015dimensionality}: we
get $O(\log(k / \eps) / \eps^2)$ dimensions with distortion $1 + \eps$, thus
resolving an open problem posed in~\cite{cohen2015dimensionality}.

The literature on data-dependent dimension reduction for $k$-means is ample \cite{drineas1999clustering, sarlos2006improved, boutsidis2009unsupervised, boutsidis2010random, feldman2013turning, boutsidis2013deterministic, boutsidis2015randomized, cohen2015dimensionality} and we refer the reader
to~\cite{cohen2015dimensionality} for a~comprehensive overview.
Let us note that none of these results obtain dimension better than $k$.

For $p \ne 2$, even the $\log n$ bound was not known previously. The question of obtaining the
$\log n$ bound for $p = 1$ ($k$-median) was explicitly posed recently by Kannan~\cite{kannan2018}.

A notion related to dimension reduction is that of a \emph{coreset}, which is a small subsample of a~dataset that approximately preserves
the cost of $k$-clustering. A good overview of this rich line of work appears in~\cite{sohler2018strong}.

Independently of our work, Becchetti, Bury, Cohen-Addad, Grandoni and Schwiegelshohn~\cite{becchetti2019} obtained a
result for $k$-means under dimension reduction. Their result applies only to $k$-means ($p = 2$); their bound on the
target dimension $O\left(\frac{\log k + \log \log n}{\eps^6} \cdot \log \frac{1}{\eps}\right)$ is somewhat weaker than
ours and depends on $n$.

\subsection{Proof Overview}

As we discussed earlier, it is easy to show that for any fixed clustering $\calC =(C_1,\dots,C_k)$, we have $\cost_p(\pi \calC) \approx_{1+\varepsilon} \cost_p(\calC)$
w.h.p. In particular, for the optimal clustering $\calC^*$, $\cost_p(\pi \calC^*) \approx_{1+\varepsilon} \cost_p(\calC^*)$, and, consequently,
the cost of the optimal clustering for $\pi X$ is upper bounded by the cost of the optimal clustering for $X$ up to a factor of $(1+\varepsilon)$ w.h.p. However,
it is not at all obvious how to obtain a lower bound on the cost of the optimal solution for $\pi X$  since there may exist a~clustering $\pi \calC'$ of $\pi X$ which
is better than $\pi \calC^*$. Note that we cannot use the union bound to prove Theorem~\ref{main_th_informal} as the number of possible clusterings of $X$ is
exponential in $n$, but the dimension $d$ does not depend on $n$.

In this section, we discuss the main ideas we use in the proof of Theorem~\ref{main_th_informal}. We show that with high probability the following
two statements hold
(a) for all $\calC$ we have $\cost_p(\calC) \leq (1+\varepsilon) \cost_p(\pi \calC)$
and
(b) for all $\calC$ we have $\cost_p(\pi \calC) \leq (1+\varepsilon) \cost_p(\calC)$.
The proofs of these statements are similar. To simplify the exposition, we focus on the former inequality in this proof overview.

To illustrate our approach, first consider an easy case when $X$ is embedded into a $d=O(\log n/\varepsilon^2)$
dimensional space. In this case, all distances between points in $X$ are approximately preserved w.h.p. That is, $\|x'-x''\|\approx_{1+\varepsilon}\|\pi x'-\pi x''\|$
for all $x',x''\in X$. We prove that if all distances in $X$ are approximately preserved then for every clustering $\calC=(C_1,\dots, C_k)$ we have
$\cost_p(\calC) \approx_{1+\varepsilon} \cost_p(\pi \calC)$. In fact, we prove a slightly stronger statement:
for every cluster $C\subset X$ we have $\cost_p(C) \approx_{1+\varepsilon} \cost_p(\pi C)$. As we shall see in a moment, the proof
is immediate for $k$-means but requires some work for $k$-median and other $\ell_p$ objectives.

For the $k$-means objective ($p=2$), we can use the following
well-known formula:
\begin{equation}\label{eq:intro:kmeans-costC}
\cost_2(C) = \frac{1}{2|C|} \sum_{(x',x'')\in C\times C} \|x'-x''\|^2,
\end{equation}
and, similarly,
\begin{equation}\label{eq:intro:kmeans-cost-pi-C}
\cost_2(\pi C) = \frac{1}{2|C|} \sum_{(x',x'')\in C\times C} \|\pi x'- \pi x''\|^2.
\end{equation}
Since each term $\|x'-x''\|^2$ in (\ref{eq:intro:kmeans-costC}) approximately equals the corresponding term $\|\pi x'- \pi x''\|^2$ in (\ref{eq:intro:kmeans-cost-pi-C}), we have
$\cost_2(C) \approx_{1+O(\varepsilon)} \cost_2(\pi C)$.

\medskip

\noindent\textbf{One-point robust extension.}
The above proof does not generalize to $\ell_p$ objectives with $p\neq 2$. So our proof relies on the Kirszbraun
theorem~\cite{kirszbraun1934zusammenziehende}.

\begin{theorem}[Kirszbraun theorem]
For every subset $X\subset \bbR^d$ and $L$-Lipschitz map\footnote{Recall that $\varphi$ is an $L$-Lipschitz map if
for all $x',x''\in X$, we have $\|\varphi(x')-\varphi(x'')\| \leq L\|x'-x''\|$.}  $\varphi: X \to \bbR^m$, there exists an $L$-Lipschitz
extension $\widetilde\varphi$ of $\varphi$ from $X$ to the entire space $\bbR^d$.
\end{theorem}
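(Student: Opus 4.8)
The plan is to reduce to $L=1$ by rescaling and then build $\widetilde\varphi$ one point at a time. Order the set of $1$-Lipschitz maps $\psi\colon Y\to\bbR^m$ with $X\subseteq Y\subseteq\bbR^d$ and $\psi|_X=\varphi$ by inclusion of graphs; every chain has an upper bound (the union of the graphs is again a $1$-Lipschitz extension), so by Zorn's lemma there is a maximal such $\widetilde\varphi\colon Y^*\to\bbR^m$. It then suffices to prove the \emph{one-point extension} statement: for any $1$-Lipschitz $\psi\colon Y\to\bbR^m$ and any $x_0\in\bbR^d$ there is $y_0\in\bbR^m$ with $\|y_0-\psi(y)\|\le\|x_0-y\|$ for all $y\in Y$ --- then $Y^*$ must be all of $\bbR^d$, for otherwise $\widetilde\varphi$ could be extended by one more point, contradicting maximality. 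Finding such a $y_0$ is exactly asking that the closed balls $\bar B\brc{\psi(y),\,\|x_0-y\|}$, $y\in Y$, have a common point.

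The geometric heart is Kirszbraun's ball lemma: if $x_1,\dots,x_n\in\bbR^d$, $y_1,\dots,y_n\in\bbR^m$ satisfy $\|y_i-y_j\|\le\|x_i-x_j\|$ for all $i,j$, and radii $r_1,\dots,r_n>0$ are such that $\bigcap_i\bar B(x_i,r_i)\ne\emptyset$, then $\bigcap_i\bar B(y_i,r_i)\ne\emptyset$. This gives the one-point extension when $Y=\{x_1,\dots,x_n\}$ is finite, taking $y_i=\psi(x_i)$ and $r_i=\|x_0-x_i\|$ (every $x$-ball contains $x_0$). The case of infinite $Y$ then follows by compactness: the balls $\bar B\brc{\psi(y),\|x_0-y\|}$ are closed and bounded, hence compact, and by the finite case every finite subfamily has a common point, so the finite-intersection property gives a common point of the whole family.

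To prove the ball lemma, let $\lambda\ge0$ be the smallest scalar with $\bigcap_i\bar B(y_i,\lambda r_i)\ne\emptyset$ (it exists by compactness) and let $y^*$ be a point of that intersection; I claim $\lambda\le1$. Strict convexity of Euclidean balls makes $y^*$ unique, and minimality of $\lambda$ forces $y^*$ to lie in the convex hull of the \emph{active} centers $\{y_i:i\in I\}$, where $I=\{i:\|y^*-y_i\|=\lambda r_i\}$: otherwise, pushing $y^*$ a small step toward that hull strictly shrinks every active distance while keeping the inactive constraints slack, contradicting minimality. Write $y^*=\sum_{i\in I}\mu_iy_i$ with $\mu_i\ge0$, $\sum_{i\in I}\mu_i=1$, and set $\mu_i=0$ for $i\notin I$. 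Now combine two elementary facts about a convex combination: for $a=\sum_i\mu_ib_i$ one has $\sum_i\mu_i\|a-b_i\|^2=\sum_{i<j}\mu_i\mu_j\|b_i-b_j\|^2$, while for an \emph{arbitrary} point $c$ one has $\sum_i\mu_i\|c-b_i\|^2\ge\sum_{i<j}\mu_i\mu_j\|b_i-b_j\|^2$ (the difference equals $\|c-\sum_i\mu_ib_i\|^2\ge0$). Applying the identity with $a=y^*$, $b_i=y_i$, the inequality with $c=x^*$ for an arbitrary $x^*\in\bigcap_i\bar B(x_i,r_i)$ and $b_i=x_i$, and the hypothesis $\|y_i-y_j\|\le\|x_i-x_j\|$, we obtain
$$
\lambda^2\sum_i\mu_ir_i^2=\sum_i\mu_i\|y^*-y_i\|^2=\sum_{i<j}\mu_i\mu_j\|y_i-y_j\|^2\le\sum_{i<j}\mu_i\mu_j\|x_i-x_j\|^2\le\sum_i\mu_i\|x^*-x_i\|^2\le\sum_i\mu_ir_i^2,
$$
and since $\sum_i\mu_ir_i^2>0$ this yields $\lambda\le1$, proving the lemma.

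The step I expect to be the main obstacle is the claim that the circumcenter $y^*$ at the minimal scale lies in the convex hull of its active centers: this is where Euclidean geometry genuinely enters (it is the only place where the argument would break down over a general normed space) and it needs a real variational argument --- if $y^*$ lay strictly outside that hull, separate it from the hull by a hyperplane with unit normal $v$ pointing toward the centers, check that $\|y^*+tv-y_i\|<\|y^*-y_i\|$ for every active $i$ and all small $t>0$, and then use finiteness of the family to get a uniform decrease of $\lambda$. Everything else is routine: the rescaling, the two algebraic (in)equalities, the compactness passage from finite to infinite families, and the Zorn's-lemma bookkeeping. (If $X$ is separable, one can replace Zorn's lemma by exhausting $\bbR^d$ along a countable dense set and passing to a pointwise limit of partial extensions, but the transfinite version is cleaner in general.)
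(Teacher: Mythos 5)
Your argument is correct and is the standard Kirszbraun--Valentine proof: rescale to $L=1$; use Zorn's lemma plus the one-point extension principle; reduce one-point extension to Kirszbraun's ball-intersection lemma; prove the ball lemma by shrinking the radii to the minimal scale $\lambda$ with nonempty intersection, showing via a separating-hyperplane perturbation that the minimizer $y^*$ lies in the convex hull of its active centers, and then comparing the two elementary quadratic identities for convex combinations to conclude $\lambda\le1$. Two small points to make explicit when writing it up: the case $\lambda=0$ is trivial, and for $\lambda>0$ the active set $I$ must be nonempty (otherwise $\lambda$ could still be decreased), which is what justifies dividing by $\sum_i\mu_i r_i^2>0$.

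Note, however, that the paper does not prove this theorem; it is the classical Kirszbraun theorem, cited to~\cite{kirszbraun1934zusammenziehende} in the proof overview only to motivate what follows. What the paper actually proves and uses is Theorem~\ref{thm:robust-Kirszbraun}, a robust one-point extension in which $\varphi$ is only assumed nonexpanding on all but a $\theta$-fraction of pairs, and the lifted center $v$ is only required to satisfy $\|\varphi(x)-v\|\le(1+\varepsilon)\|x-u\|$ for all but an $O(\theta/\varepsilon)$-fraction of points $x$. The one-point extension lemma you isolate is precisely the $\theta=0$ boundary case of that statement, but the paper proves the robust version by a genuinely different route: a von Neumann minimax (duality) argument over weight vectors $\lambda$ in the truncated simplex $\Lambda_\eta$, exploiting the closed-form expression for the optimal weighted $k$-means cluster cost as a $\lambda$-weighted sum of pairwise squared distances. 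The truncation level $\eta$ is what bounds the fraction of exceptional points, so the duality argument is what buys robustness; your ball-lemma/compactness argument gives the exact Lipschitz constant for all of $X$ (and handles infinite $X$ via Zorn plus the finite intersection property), but it does not obviously degrade gracefully when a small fraction of distances expand, which is exactly the situation the paper has to handle after dimension reduction.
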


Let $Y=\pi(C)$. Observe that the map $\pi: C \to \bbR^d$ and inverse map $\pi^{-1}: Y \to \bbR^m$ are $(1+\varepsilon)$-Lipschitz with high probability.
Let $u$ be the optimal center for cluster $\pi C$. Using the Kirszbraun theorem, we extend the map $\pi^{-1}: Y \to \bbR^m$ to $\widetilde{\varphi}: \bbR^d \to \bbR^m$ and then lift the optimal center $u$ from $\bbR^d$ to $\bbR^m$ by letting $v = \widetilde{\varphi}(u)$. Then, for all
$y\in Y$ we have $\|v - \pi^{-1} y\| \leq (1+\varepsilon) \|u - y\|$ or, equivalently,
for all $x\in C$ we have $\|v - x\| \leq (1+\varepsilon) \|u - \pi x\|$.
We pick point $v$ as the center for the cluster $C$ and obtain the following bound:
$$\cost_p(C) \leq \sum_{x\in C} \|x - v\|^p \leq \sum_{x\in C} (1+\varepsilon)^p \|\pi x - u\|^p = (1+\varepsilon)^p\cost_p(\pi C).$$

\medskip

We now return to the case when $d=c_p \log (k/(\delta \varepsilon))/\varepsilon^2)$ (where $c_p$ only depends on $p$). Observe that when we
reduce the dimension of $X$ to $d$, while most pairwise distances in $X$ are approximately preserved, some are distorted. The Kirszbraun theorem
does not hold in this setting. We prove a robust 1-point extension theorem (Theorem~\ref{thm:robust-Kirszbraun}).

Loosely speaking, this theorem states the following. Consider a~finite set $C\subset \bbR^d$ and map $\varphi: C\to \bbR^m$, satisfying the following condition ($\star$):
\begin{itemize}
  \item for every $x\in C$, the distance from $x$ to all but a $\theta$ fraction of $x'\in C$ is $(1+\varepsilon)$-preserved under~$\varphi$.
\end{itemize}
Then, for every point $u \in \bbR^d$, there exists a point $v\in \bbR^m$ such that for all but $\theta'$ fraction of points $x \in C$, we have $\|x - u\| \leq (1+\varepsilon') \|\varphi(x)- v\|$ where $\varepsilon'=O(\varepsilon)$ and $\theta' =  O(\theta/\varepsilon)$.

\medskip

\noindent\textbf{Worst cluster for each projection.} Consider a random
dimension reduction map $\pi \colon \bbR^m \to \bbR^d$. We prove that for
every cluster $C$,
\begin{equation}
\cost_p C \leq (1+\varepsilon') \cost_p \pi(C) + \varepsilon'' \cost_p \calC^* ,\label{eq:bound-main-2-overview}
\end{equation}
where $\cost_p \calC^*$ is the cost of the optimal $k$-means clustering. Here, $\varepsilon'$ and $\varepsilon''$ are
some parameters (see~(\ref{eq:bound-main-2}) for details). For each realization of $\pi$, let us pick
a subset $C\subset X$ for which the gap between $\cost_p(C)$ and $\cost_p(\pi C)$ is the largest. We would
like to use our new 1-point extension theorem to lift the optimal center of $\pi C$ from $\bbR^d$ to
the original space $\bbR^m$. However, we cannot do this directly, since it may be the case that most pairwise
distances in $C$ are distorted.

\medskip

\noindent\textbf{Everywhere sparse distortion graphs.} To deal with this
problem, we define a distortion graph $G$ for map $\pi: X\to \bbR^m$. The
vertex set of $G$ is $X$. Two points $u,v\in X$ are joined by an edge in $G$
if $\pi$ distorts the distance between $u$ and $v$ by a factor of at least
$(1+\varepsilon)$. We say that a subgraph $H$ of $G$ is $\theta$-everywhere
sparse if the degree of all vertices in $H$ are upper bounded by $\theta
|V_H|$. Observe that the probability that $(u,v)$ is an edge in $G$ is less
than $1/poly(k)$ if $d\gtrsim \log (k) /\varepsilon^2$. Note that the
distortion graphs for $\pi$ and its inverse $\pi^{-1}$ are the same.

We can now restate condition ($\star$) of the 1-point extension theorem as
follows: The distortion graph for the map $\varphi$ is $\theta$-everywhere
sparse. Note, however, that the induced distortion graph $G[C]$ for maps
$\pi$ and $\pi^{-1}$ is not necessarily $\theta$-everywhere sparse.
Nevertheless, we show that it is possible to remove some outlier points from
$X$ to make the induced graph $G[C]$ sparse. Specifically, we prove that
there exists a random set $B$ (which depends on $\pi$) such that 
\begin{itemize}
  \item[(a)] the graph $G[C\setminus B]$ is $\theta$-everywhere sparse;
  \item[(b)] $\Pr(x\in B)\leq \theta$ for all $x\in B$.
\end{itemize}

The proof of the existence of $B$ is fairly simple for the special case when the size of cluster $C$ is approximately $n/k$.
Mark a point $x\in X$ as an outlier if its degree in $G$ is greater than $\theta n/(2k)$ and let $B$ be the set of all outliers.
Observe that the degree of every vertex in $C\setminus B$ is at most $\theta n/(2k)$ and, thus, the maximum degree in the
induced graph $G[C\setminus B]$ is at most $\theta n/(2k)$.

The expected degree of a vertex in $G$ is much smaller than $\theta^2 n/k^2$,
since the probability that the distance between two given points is distorted
is much less than $\theta^2/k^2$ (if $d\gtrsim \log (k/\theta)
/\varepsilon^2$). Hence, by Markov's inequality, $\Pr(x\in B)\ll
\theta/k$ for any $x\in X$ and, therefore, condition (b) is satisfied. Also,
by linearity of expectation, the expected size of $B$ is at most $\theta
n/k$, and by Markov's inequality, $|B|\ll n/k$ w.h.p. Thus, $|C\setminus
B|\geq n/(2k)$ w.h.p. and, consequently, $G[C\setminus B]$ is
$\theta$-everywhere sparse w.h.p. The proof for the general case when the
size of the cluster $C$ may be arbitrary is more involved (see
Theorem~\ref{thm:everywhere-sparse}). Roughly speaking, we reduce the general
case to the case when $|C|\approx n/k$ by introducing a carefully chosen
measure $\mu$ on $X$ that we use as a proxy for the set sizes.

\medskip

\noindent\textbf{Outliers.} We are now almost done. The distortion graph for
the set $C\setminus B$ is $\theta$-everywhere sparse. Thus, by the robust
1-point extension theorem applied to the set $\pi(C\setminus B)$ and map
$\pi^{-1}$, we can lift the optimal center of $C\setminus B$ to the space
$\bbR^m$. To finish the proof, it only remains to take care of two sets of
outliers.

The first set is the set
$B$. We slightly modify our data set. We move each outlier $x$ to the optimal
center of the cluster $\calC^*(x)$ that the point $x$ is assigned to in the
optimal solution to $k$-clustering and then
return $x$ back to its cluster $C$. Since the probability that $x$ is an
outlier is very small, the cost of moving $x$ is also small (see
Theorem~\ref{thm:cost-one-cluster}).

The second type of outliers are the points $x\in C$ such that $\|x-v\|
> (1+\varepsilon) \|\pi x - u\|$, where $u$ is the optimal center of $C$ and
$v$ is the optimal center lifted to $\bbR^m$. The fraction of such outliers
among all points in $C$ is at most $\theta'$. We charge the connection costs
of these outliers (i.e., the costs of assigning outliers to the centers of
their clusters) to the connection costs of other vertices.

This concludes the proof. In the rest of the paper, we give a detailed proof of~Theorem~\ref{main_th_informal}.

\section{Preliminaries}
Let $X \subset \bbR^d$ be an instance of Euclidean $k$-clustering with the $\ell_p$ objective.
We denote the optimal $k$-clustering of $X$ by $\calC^* = \{C_1^*,\dots, C_k^*\}$ and
its optimal centers by $c_1^*,\dots, c_k^*$. Given a cluster $S\subset X$, let
$\cost_p S = \min_{c\in \bbR^d} \sum_{u\in \bbR^d} \|u-c\|^p$ be its cost.
Given a clustering $\calC = \{C_1,\dots, C_k\}$, let $\cost_p \calC = \sum_{i=1}^k \cost_p C_i$
be its cost. In particular, $\cost \calC^*$ is the cost of the optimal clustering of $X$.
Given a map $\pi$ and $\calC = \{C_1,\dots, C_k\}$, denote by $\pi(\calC)$ the clustering
$\pi(C_1),\dots, \pi(C_k)$ of $\pi(X)$.

We denote the indicator of an event $\calE$ by $\ONE\set{\calE}$.

\begin{definition}\label{def:dim-red-parameters}
Random map $\pi:\bbR^m \to \bbR^d$ is an $(\varepsilon,\delta)$-random dimension reduction if
$$
    \frac{1}{1+\eps} \|x-y\| \leq \|\pi(x) - \pi(y)\| \leq (1+\eps)\|x-y\|
$$
with probability at least $1-\delta$ for every $x,y\in \bbR^m$.
Given $p\in[1,\infty)$, random map $\pi$ is an $(\varepsilon,\delta, \rho)$-dimension reduction  if it
additionally satisfies the following property
\begin{equation}
\E{\ONE\set{\|\pi(x)-\pi(y)\| > (1+\eps) \|x-y\|}
\Bigl(\frac{\|\pi(x)-\pi(y)\|^p}{\|x-y\|^p} -(1+\eps)^p \Bigr) } \leq \rho.\label{eq:tail-bound}
\end{equation}
Given $\varepsilon > 0$, we say that a random dimension reduction $\pi:\bbR^m \to \bbR^d$ is standard if it has
parameters $(\varepsilon,\delta, \rho)$ and $\delta \leq \exp(-c \eps^2 d)$,  $\rho \leq \exp(-c \eps^2 d)$
when $d > c' p/\eps^2$ for some constants $c,c'>0$.
\end{definition}
\begin{definition}
We say that $\pi$ approximately preserves or $(1+\eps)$-preserves the distance between $x$ and $y$ if $\frac{1}{1+\varepsilon} \|x-y\| \leq
\|\pi(x) - \pi(y)\| \leq (1+\varepsilon) \|x-y\|$. Otherwise, we say that $\pi$ distorts the distance between $x$ and $y$. For
brevity, we also say that $(x,y)$ is $(1+\eps)$-preserved in the former case and distorted in the latter case.

We define the distortion graph $G=(X, E)$ for $\pi$ as follows. Two points $x, y \in X$ are connected with an edge if $\pi$ distorts the
distance between them. Note that when $\pi$ is an $(\varepsilon,\delta)$-random dimension reduction, $G$ is a random graph and
$\Pr((x,y)\in E) \leq \delta$ for every $x,y \in X$.
\end{definition}
As we prove in Lemma~\ref{lem:sub-gauss-is-standard}, every dimension reduction that (1) satisfies Theorem~\ref{distributional_jl} and (2) is
sub-Gaussian tailed is standard. We note that all dimension reduction constructions (satisfying Theorem~\ref{distributional_jl}) that we are aware of are sub-Gaussian tailed and thus standard. In particular, we prove in Claim~\ref{claim:gaussian-is-standard} that the Gaussian dimension reduction is sub-Gaussian tailed.

\section{Dimension reduction preserves cluster costs}
In this section, we prove Theorem~\ref{main_th_informal}, the main result of this paper. The proof relies on Theorems~\ref{thm:everywhere-sparse} and~\ref{cor:main-Kriszbraun-corollary}, which we prove later in
Sections~\ref{sec:everywhere-sparse} and~\ref{sec:kirszbraun}. We note that Theorem~\ref{thm:everywhere-sparse} is a purely combinatorial theorem about random graphs, which does not deal with distances or embeddings.
On the other hand, Theorem~\ref{cor:main-Kriszbraun-corollary} is about a deterministic dimension reduction map $\varphi$, it states that $\varphi$ approximately preserves the cost of a cluster $C$ loosely speaking when the maximum degree in $G[C]$ (the graph induced by $C$ on the distortion graph) is much smaller than $|C|$.

\subsection{Theorems~\ref{thm:everywhere-sparse} and \ref{cor:main-Kriszbraun-corollary}}
We formally state Theorems \ref{thm:everywhere-sparse} and \ref{cor:main-Kriszbraun-corollary}.
\begin{definition}
A graph $H = (V, E)$ is $\theta$-everywhere sparse if $\deg u \leq \theta
|V|$ for every $u \in V$. (We allow $V$ to be empty.)
\end{definition}
\begin{theorem}\label{thm:everywhere-sparse}
Consider a finite set $X$ and a random graph $H = (V, E)$, where $V$ is a
random subset of $X$ and $E$ is a random set of edges between vertices in $V$
(we do not make any independence assumptions or any other implicit
assumptions about the distribution of $V$ and $E$). Let $\theta \in
(0,1/2)$. Assume that $\Prob{(x,y) \in E} \leq \delta$ for every $x,y\in X$,
where $\delta \leq \theta^7/600$ (if $x\notin V$ or $y\notin V$, then
$(x,y)\notin E$). Then there exists a random subset $V' \subset V$ ($V'$ is
defined on the same probability space as $H$; in other words, it is jointly
distributed with $H$) such that
\begin{itemize}
\item $H[V']$ is $\theta$-everywhere sparse,
\item $\Prob{u\in V\setminus V'} \leq \theta$ for all $u\in X$.
\end{itemize}
\end{theorem}

\begin{theorem}\label{cor:main-Kriszbraun-corollary}
Consider a finite multiset of points $\tilde C\subset \bbR^{d'}$ and a map
$\varphi: \tilde C\to \bbR^{d''}$. Assume that the distortion graph $G$ for
$\tilde C$ is $\theta$-sparse, where $\theta \leq 1/10^{p+1}$.
Then, for every $p\geq 1$ and $D = (1+\eps)^p(1+3^{p+2}\theta^{1/(p+1)})$, we
have
$$
D^{-1} \cost_p (\tilde C) \leq \cost_p(\varphi(\tilde C)) \leq D \cost_p(\tilde C).
$$
\end{theorem}

\subsection{Cluster costs are approximately preserved}
Assume that $C \subset X$ is a random subset/cluster of $X$; $C$ may depend
on the random projection $\pi$. We prove that $\cost_p C$ is approximately
equal to $\cost_p \pi(C)$ up to some multiplicative and additive
errors with high probability.

\begin{theorem}\label{thm:cost-one-cluster}
Consider an instance $X$ of Euclidean $k$-clustering with the $\ell_p$-objective.
Let $\pi$ be a $(\varepsilon, \delta, \theta)$-random dimension reduction, and $C$ be a random subset/cluster of $X$ (which may depend on $\pi$).
Assume that $\theta \leq 1/10^{p+1}$ and $\delta \leq \min(\theta^7/600, \theta/k)$. 
Denote the optimal clustering of $X$ by $\calC^*$. Then
\begin{align}
\cost_p \pi(C) &\leq A\left(\cost_p C + c_{\alpha\epsilon\theta} \cost_p \calC^*\right) \label{eq:bound-main-1}\\
\cost_p C &\leq A\left(\cost_p \pi(C) + c_{\alpha\epsilon\theta} \cost_p \calC^* \right) \label{eq:bound-main-2}
\end{align}
with probability at least $1 - \alpha-\binom{k}{2} \delta$, where $A = (1+\varepsilon)^{3p-2}
(1+3^{p+2}\theta^{1/(p+1)})$ and $c_{\alpha\epsilon\theta} =\frac{5(1+\eps)^p\theta}{\alpha\eps^{p-1}}$.
\end{theorem}
\begin{proof}
Let $c$ be the optimal center for cluster $C$. Denote the clusters of $\calC^*$ by $C_1^*,\dots, C_k^*$ and the optimal
centers for $C_1^*,\dots, C_k^*$ by $c_1^*,\dots, c_k^*$. Let $c^*(x) = c_i^*$ if $x\in C_i^*$. Consider the event $\calE$ that all the
distances between the centers $c_i^*$ are $(1+\eps)$-preserved. Note that $\Prob{\calE} \geq 1-\binom{k}{2} \delta$. We
assume below that $\calE$ happens. Let
$$C^{\circ} = \{x\in C: \pi \text{ approximately preserves the distance between } x \text{ and each } c_i^*\}.$$
Note that $\Prob{x\in C\setminus C^{\circ}} \leq \Prob{\text{the distance
between } x \text{ and some } c_i \text{ is distorted}} \leq k \delta$ for
all $x\in X$. We apply Theorem~\ref{thm:everywhere-sparse} to $H =
G[C^{\circ}]$ (the graph induced by $C^{\circ}$ on the distortion graph $G$).
We get a subset $C'\subset C^{\circ}$ such that
$G[C']$ is $\theta$-everywhere sparse and $\Prob{x\in C^{\circ}\setminus C'}
\leq \theta$. Observe that
$$\Prob{x\in C\setminus C'} \leq \Prob{x\in C\setminus C^{\circ}} + \Prob{x\in C^{\circ}\setminus C'} \leq k\delta + \theta \leq 2\theta.$$

Let
$$g(x) = \begin{cases}x, &\text{if } x\in C'\\ c^*(x), &\text{if } x\notin C'.\end{cases}$$ Consider multiset
$\widetilde C = g(C)$. By construction, every point in $\widetilde C$ is
either in $C'$ or equal to some $c_i^*$. Let $\tilde c$ be the optimal center
for $\widetilde C$. Since $G[C']$ is $\theta$-everywhere sparse, the map
$\pi$ approximately preserves the distances from every $x\in C'$ to at least
a $(1-\theta)$ fraction of the points in $C'$ and to all $c_i^*$. Further,
since $\cal E$ happens, all distances between centers $c_i^*$ are
$(1+\eps)$-preserved. Therefore, we can apply
Theorem~\ref{cor:main-Kriszbraun-corollary} to multiset $\widetilde C$ and
map $\varphi = \pi$. We get that
\begin{equation}\label{eq:bound-one-cluster}
D^{-1} \cost_p (\widetilde C) \leq \cost_p(\pi(\widetilde C)) \leq D \cost_p(\widetilde C)
\end{equation}
for $D = (1+\eps)^p(1+3^{p+2}\theta^{1/(p+1)})$. 
Note that
$$\cost_p C \leq \sum_{x\in C} \|x-\tilde c\|^p \quad\text{and}\quad \cost_p \widetilde C = \sum_{x\in C} \|g(x)-\tilde c\|^p,$$
since $\tilde c$ is the optimal center for $\widetilde{C}$.
Compare the summations in the upper bound for $\cost_p C$ and the formula for $\cost_p \widetilde C$ term by term.
For $x\in C'$, $g(x) = x$ and thus the terms in both summations are equal. For $x\notin C'$, the terms in the first and
second summations are equal to $\|x - \tilde c\|^p$ and $\|c^*(x) - \tilde c\|^p$, respectively.
Therefore,
$$\cost_p C - (1+\eps)^{p-1} \cost_p \widetilde C \leq
\sum_{x\in C\setminus C'} \|x - \tilde c\|^p - (1+\eps)^{p-1}\|c^*(x) -
\tilde c\|^p.$$ Observe that $\|x - \tilde c\| \leq \|x - c^*(x)\| + \|c^*(x)
- \tilde c\|$. Applying Lemma~\ref{ineq:sum-p-pow} with $r=1$ (see
Appendix~\ref{sec:ineq:sum-p-pow}), we get
$$\|x - \tilde c\|^p \leq (1+\eps)^{p-1}  \|c^*(x) - \tilde c\|^p + \left(\frac{1+\eps}{\eps}\right)^{p-1} \|x - c^*(x)\|^p.$$
Therefore,
$$\cost_p C - (1+\eps)^{p-1} \cost_p \widetilde C \leq \left(\frac{1+\eps}{\eps}\right)^{p-1} \sum_{x\in C\setminus C'} \|x - c^*(x)\|^p.$$
Similarly, we prove that 
\begin{align*}
\cost_p \tilde C - (1+\eps)^{p-1} \cost_p C &\leq \left(\frac{1+\eps}{\eps}\right)^{p-1} \sum_{x\in C\setminus C'} \|x - c^*(x)\|^p\\
\cost_p \pi(C) - (1+\eps)^{p-1} \cost_p \pi(\tilde C) &\leq \left(\frac{1+\eps}{\eps}\right)^{p-1} \sum_{x\in C\setminus C'} \|\pi(x) - \pi(c^*(x))\|^p\\
\cost_p \pi(\tilde C) - (1+\eps)^{p-1} \cost_p \pi(C) &\leq \left(\frac{1+\eps}{\eps}\right)^{p-1} \sum_{x\in C\setminus C'} \|\pi(x) - \pi(c^*(x))\|^p
\end{align*}
Combining these bounds with inequality~(\ref{eq:bound-one-cluster}), we
obtain 
\begin{align*}
\cost_p \pi(C) &\leq A\left(\cost_p C + \varepsilon^{1-p} \sum_{c\in X \setminus X'} R_x\right) \\
\cost_p C &\leq A\left(\cost_p \pi(\calC) + \varepsilon^{1-p} \sum_{c\in X \setminus X'} R_x\right)\\
\text {where } \quad A &= (1+\varepsilon)^{2(p-1)} D \quad\text{and}\quad
R_x = \|x - c^*(x)\|^p + \|\pi(c^*(x)) - \pi(x)\|^p.
\end{align*}

It remains to prove that $\varepsilon^{1-p} \sum_{x\in C\setminus C'}\ONE\set{\calE} R_x \leq c_{\alpha\epsilon\theta}
\cost_p \calC^*$ with probability at least $1-\alpha$. To this end, we show that $\E{\ONE\set{\calE}  \sum_{x\in C\setminus
C'} R_x} \leq 5(1+\eps)^p\theta \cost_p \calC^*$ and then use Markov's inequality. From property~(\ref{eq:tail-bound}) in
Definition~\ref{def:dim-red-parameters} we get that for every $x\in C$,
$$\E{\max(\|\pi(c^*(x)) - \pi(x)\|^p - (1 + \eps)^{p} \|c^*(x) - x\|^p, 0)} \leq  \theta \|c^*(x) - x\|^p.$$
Write,
\begin{multline*}
\Exp\bigl[\sum_{x\in C\setminus C'} \|\pi(c^*(x)) - \pi(x)\|^p\bigr] \leq (1+\eps)^{p}\Exp\bigl[\sum_{x\in C\setminus C'} \|c^*(x) - x\|^p\bigr] \\  +  \Exp\bigl[\sum_{x\in C\setminus C'} \max(\|\pi(c^*(x)) - \pi(x)\|^p - (1 + \eps)^{p} \|c^*(x) - x\|^p, 0)\bigr]
\end{multline*}
Now, we bound the second term as
$$
\sum_{x\in X} \E{\max(\|\pi(c^*(x)) - \pi(x)\|^p - (1 + \eps)^{p} \|c^*(x) - x\|^p, 0)}
\leq  \sum_{x\in X}\theta \|c^*(x) - x\|^p = \theta\cost_p \calC^*
$$
Therefore,
$$
\Exp\bigl[\sum_{x\in C\setminus C'} \|\pi(c^*(x)) - \pi(x)\|^p\bigr] \leq  (1+\eps)^{p}\Exp\bigl[\sum_{x\in C\setminus C'} \|c^*(x) - x\|^p\bigr] + \theta \cost_p \calC^*.
$$
We also have,
$$
\E{\sum_{x\in C \setminus C'} \|x-c^*(x)\|^p} \leq \sum_{x\in X}  \Prob{x\in C\setminus C'} \|x-c^*(x)\|^p
 \leq 2\theta\sum_{x\in X} \|x-c^*(x)\|^p = 2\theta \cost_p \calC^*.
$$
Therefore,
$$
\Exp\bigl[\ONE\set{\calE} \sum_{x\in C\setminus C'}  R_x \bigr]\leq
\Exp\bigl[\sum_{x\in C\setminus C'} \|\pi(c^*(x)) - \pi(x)\|^p + \|x-c^*(x)\|^p\bigr] \leq 5(1 + \eps)^p \theta \cost_p \calC^*.
$$
By Markov's inequality,
$$\Prob{\ONE\set{\calE} \eps^{1-p} \sum_{x\in C\setminus C'} R_x \geq c_{\alpha\epsilon\theta} \cost_p \calC^*} \leq \eps^{1-p} \cdot \frac{5(1+\eps)^p\theta
\cost_p \calC^*}{c_{\alpha\epsilon\theta} \cost_p \calC^*} = \alpha.
$$
We conclude that
$$
 \Prob{(\ref{eq:bound-main-1}) \text{ and } (\ref{eq:bound-main-2}) \text{ hold}} \geq 1 - \alpha - \Prob{\calE} \geq 1 - \alpha - \binom{k}{2}\delta.$$
\end{proof}

\subsection{Proof of main results}
Now we prove the main results: Theorem~\ref{thm:main}, Theorem~\ref{thm:main-alt} and Theorem~\ref{main_th_informal}.

\begin{theorem}\label{thm:main}
Let $X$ be an instance of $k$-clustering with the $\ell_p$-objective.
Let $\varepsilon \in (0,1/4)$ (the distortion parameter) and $\alpha \in (0,1)$ (the failure probability).
Let $\pi:\bbR^m \to \bbR^d$ be a standard random dimension-reduction map with
\begin{equation}
d = \frac{c (\log \frac{k}{\alpha} + p\log \frac{1}{\varepsilon} + p^2)}{\varepsilon^2} \label{eq:cond-on-d}
\end{equation}
(where $c$ only depends on the parameters of map $\pi$ in Definition~\ref{def:dim-red-parameters}).

With probability at least $1-\alpha$ the following event $\cal D$ happens: for every clustering
$\calC$ of $X$ inequalities (\ref{eq:thm-1}) and (\ref{eq:thm-2}) hold.
\begin{align}
\cost_p \pi(\calC) &\leq (1+\eps)^{3p} \cost_p \calC \label{eq:thm-1}\\
(1-\varepsilon) \cost_p \calC &\leq (1+\eps)^{3p-1} \cost_p \pi(\calC) \label{eq:thm-2}
\end{align}
\end{theorem}
\begin{proof}
Let $\theta = \min(\eps^{p+1} 3^{-(p+1)(p+2)}, \alpha
\eps^p/(10k(1+\eps)^{4p-1}), 1/10^{p+1})$. We choose constant $c$ in (\ref{eq:cond-on-d}) so that the
parameters $(\varepsilon, \delta, \rho)$ of dimension reduction $\pi$ satisfy
$\delta \leq \min(\theta^7/600, \theta/k)$, $\binom{k}{2}\delta \leq \alpha
/2$, and $\rho \leq \theta$ (we can do this, since $\pi$ is a standard
dimension reduction). By our choice of parameters, $A \leq (1+\eps)^{3p-1}$
and $Ac_{\alpha\eps\theta} \leq \eps/k$ in the statement of
Theorem~\ref{thm:cost-one-cluster}.

To apply Theorem~\ref{thm:cost-one-cluster}, we define a random cluster $C$. First, assume that  event $\cal D$ does not happen. Then there exists a clustering $C_1,\dots, C_k$  violating one of the inequalities (\ref{eq:thm-1}) or (\ref{eq:thm-2}).
Thus, one of the following inequalities hold.
\begin{align}
\sum_{i=1}^k \cost_p \pi(C_i) &\geq A \left(\sum_{i=1}^k \cost_p C_i\right) + \varepsilon \cost_p \calC \\
\sum_{i=1}^k \cost_p C_i      &\geq A \left(\sum_{i=1}^k \cost_p \pi(C_i) \right) + \varepsilon \cost_p \calC
\end{align}
Therefore, for some $C_i$, we have either (\ref{eq:thm-1-one}) or (\ref{eq:thm-2-one}).
\begin{align}
\cost_p \pi(C_i) &\geq A \cost_p C_i + \frac{\varepsilon}{k} \cost_p \calC \label{eq:thm-1-one}\\
\cost_p C_i      &\geq A \cost_p \pi(C_i) + \frac{\varepsilon}{k} \cost_p \pi(\calC) \label{eq:thm-2-one}
\end{align}
Let  $C = C_i$. If $\cal D$ happens, we let $C$ be an arbitrary cluster (e.g., let $C$ consist of a single point).
Note that if $\cal D$ does not happen then we have one of the following, since $\cost_p \calC^* \leq \cost_p \calC$.
\begin{align}
\cost_p \pi(C) &\geq A \cost_p C + \frac{\varepsilon}{k} \cost_p \calC^* \label{eq:thm-1-C}\\
\cost_p C      &\geq A \cost_p \pi(C) + \frac{\varepsilon}{k} \cost_p \pi(\calC^*) \label{eq:thm-2-C}
\end{align}

Applying Theorem~\ref{thm:cost-one-cluster}, we get that the probability that one of the inequalities (\ref{eq:thm-1-C}) or (\ref{eq:thm-2-C}) holds is at most
$\alpha/2 + \binom{k}{2} \delta \leq \alpha$. Thus $\Prob{{\cal D}} \geq 1 - \alpha$.
The statement of the theorem follows.
\end{proof}

As a corollary, we get the following formulation of the theorem.

\begin{theorem}\label{thm:main-alt}
Let $X$ be an instance of $k$-clustering with the $\ell_p$-objective.
Let $\varepsilon \in (0,1/4)$ (the distortion parameter) and $\alpha \in (0,1)$ (the failure probability).
Let $\pi:\bbR^m \to \bbR^d$ be a standard random dimension-reduction map with
$$d = \frac{c_p \log (k/(\varepsilon \alpha))}{\varepsilon^2} \quad \text{where } c_p = O(p^4).$$
Then with probability at least $1-\alpha$:
$$(1-\eps) \cost_p \calC \leq \cost_p \pi(\calC) \leq (1+\eps) \cost_p \calC \quad\text{for every clustering } \calC$$
\end{theorem}
\begin{proof}
We apply Theorem~\ref{thm:main} with $\eps' = (1+\eps)^{1/(3p)} - 1 = O(\eps/p)$ and obtain the desired inequality.
\end{proof}

\medskip

Now we prove Theorem~\ref{main_th_informal}.
\begin{proof}[Proof of Theorem~\ref{main_th_informal}]
We apply Lemma~\ref{lem:sub-gauss-is-standard} to $\pi_{m,d}$. Since $\pi_{m,d}$ satisfies the condition of Theorem~\ref{distributional_jl}
and is sub-Gaussian tailed, it is a standard dimension reduction. Applying Theorem~\ref{thm:main-alt}, we get the desired result.
\end{proof}

\section{Everywhere sparse subgraph}\label{sec:everywhere-sparse}
In this section, we prove Theorem~\ref{thm:everywhere-sparse}.
The main ingredient of the proof is Lemma~\ref{lem:main-measure}, which we prove in Section~\ref{sec:combinatorial-lemma}.

\subsection{Main Combinatorial Lemma}\label{sec:combinatorial-lemma}
\begin{lemma}\label{lem:main-measure}
Let $X$ be a finite set and $V\subset X$ be a random subset of $X$. Let $\theta \in (0,1/2)$. Assume that $\Pr(x\in V)
\geq 2\theta$ for every $x$. Then there exist a random set $R\subset V$ (defined on the same probability space as $V$) and a deterministic measure $\mu$ on $X$ such
that
\begin{enumerate}
\item $\mu(x) \geq \frac{1}{|V\setminus R|}$ for every $x\in V\setminus R$ (always);
\item $\Pr(x\in R) \leq 2\theta$ for every $x\in X$;
\item $\mu(X) = \sum_{x\in X} \mu(x) \leq \frac{\Pr(V\neq \varnothing)}{\theta^2}$.
\end{enumerate}
\end{lemma}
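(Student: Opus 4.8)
The plan is to choose $R$ and $\mu$ so that Properties 1 and 3 are almost automatic and the whole lemma collapses to a single combinatorial statement about $R$, which I verify at the end. For the measure I would take the averaged, $\theta^{-2}$-inflated choice
\[
\mu(x)\;=\;\frac{1}{\theta^2}\,\Exp\!\left[\frac{\ONE\{x\in C\setminus R\}}{|C\setminus R|}\right],\qquad 0/0:=0 .
\]
Then Property 3 is free: $\sum_{x\in X}\ONE\{x\in C\setminus R\}/|C\setminus R|$ equals $1$ when $C\setminus R\ne\varnothing$ and $0$ otherwise, so $\mu(X)=\theta^{-2}\Pr(C\setminus R\ne\varnothing)\le\theta^{-2}\Pr(C\ne\varnothing)$. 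Writing $C'=C\setminus R$ and $\nu(x)=\min\{\,|C'(\omega)|:x\in C'(\omega)\,\}$, Property 1 demands $\mu(x)\ge 1/|C'(\omega)|$ on the event $\{x\in C'\}$; since $|C'(\omega)|\ge\nu(x)$ there, and $\Exp[\ONE\{x\in C'\}/|C'|]\ge \Pr(x\in C',\,|C'|=\nu(x))/\nu(x)$, it suffices to prove the single property
\[
(\star)\qquad \Pr\big(x\in C',\ |C'|=\nu(x)\big)\ \ge\ \theta^2\qquad\text{for every }x\in X .
\]
So everything reduces to producing a random $R\subseteq C$ satisfying both Property 2 and $(\star)$.

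To build $R$ I would use per-point size thresholds together with a core-shrinking trim. Set $n_x=\min\{\,j\ge 1:\Pr(x\in C,\ |C|\le j)>\theta\,\}$, which is finite because $\Pr(x\in C)\ge 2\theta>\theta$; by definition $\Pr(x\in C,\ |C|<n_x)\le\theta$, so a point deleted only in realizations where the current set has fewer than $n_x$ elements automatically meets Property 2. Since deletions shrink the set, I would define the trim as a fixed point: let $C'(\omega)$ be the largest $S\subseteq C(\omega)$ with $n_y\le|S|$ for all $y\in S$ (well defined, since the union of two such sets is again one — this $C'(\omega)$ is the $h$-index core of the threshold sequence $(n_y)_{y\in C(\omega)}$), and put $R=C\setminus C'$. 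Every survivor $x\in C'(\omega)$ then has $n_x\le|C'(\omega)|$, the size lower bound that drives $(\star)$. If a single pass does not make Property 2 and $(\star)$ hold simultaneously, I would iterate the trim, recomputing the thresholds on the current set with a geometrically decaying budget $\theta_r=\theta\,2^{-r-1}$ so that the total deletion probability telescopes to at most $\sum_{r\ge 0}\theta_r=\theta$, and stop once the core stabilizes; the stable core is its own $h$-index core, which is precisely what the proof of $(\star)$ consumes.

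The hard part will be this simultaneous verification, since the two requirements pull against each other: trimming more aggressively raises $\nu(x)$ and forces the minimal core size to be attained with larger probability — what $(\star)$ needs — but it also increases the chance that a fixed point is deleted — what Property 2 forbids. The balance is made possible by the quadratic slack between the hypothesis $\Pr(x\in C)\ge 2\theta$ and the target $\theta^2$ in $(\star)$: one may discard an extra $\theta$ of probability per point, while the inflated expectation only needs a $\theta^2$ amount of mass sitting exactly at the minimal core size. Turning this into a proof requires (i) bounding how far core sizes can collapse when a set contains many points of large threshold — where the $h$-index structure and the definition of $n_x$ interact — and (ii) controlling the cumulative deletion probability across the iterated trim by something sharper than a naive round-by-round union bound. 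I expect the cleanest write-up to split off the construction of $R$ with the bound $\Pr(x\in R)\le\theta$ as one \emph{pruning} sublemma and the verification of $(\star)$ as a second \emph{core-exists} sublemma, matching the structure hinted at in the surrounding text.
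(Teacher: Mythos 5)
Your reduction is clean as far as it goes: with $\mu(x)=\theta^{-2}\,\Exp\bigl[\ONE\{x\in C\setminus R\}/|C\setminus R|\bigr]$ Property~3 is indeed immediate, and $(\star)$ is a \emph{sufficient} condition for Property~1. But the proof stops exactly where it would need to actually do work: you never verify $(\star)$ for the $R$ you construct, and you concede as much (``the hard part will be this simultaneous verification''). Worse, the $h$-index trim you propose \emph{does not} satisfy $(\star)$ in general, so the sufficient condition you reduced to is not met by your construction. Take $X=\{1,\dots,N\}$, draw $j$ uniformly from $\{1,\dots,N\}$, and let $C$ be a uniform $j$-subset; fix, say, $\theta=0.1$. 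Then $\Pr(x\in C)\approx 1/2\geq 2\theta$, and every $n_x$ equals roughly $\sqrt{2\theta}\,N$. The $h$-index core is $\varnothing$ when $|C|<n_x$ and all of $C$ otherwise, so $\nu(x)=n_x$ and
$$
\Pr\bigl(x\in C',\,|C'|=\nu(x)\bigr)=\Pr\bigl(x\in C,\ |C|=n_x\bigr)=\frac{n_x}{N^2}\;\longrightarrow\;0
$$
as $N\to\infty$, which is far below $\theta^2$. So $(\star)$ fails for your $R$, even though the lemma (and even your particular $\mu$) is still true in this example --- the mass that makes $\mu(x)\geq 1/\nu(x)$ comes from the whole range $|C'|\in[n_x,N]$, not from the single level $|C'|=\nu(x)$. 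In other words, $(\star)$ throws away too much.

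The paper proves the lemma by a genuinely different mechanism: induction on $|X|$. Set $l=\theta|X|$, let $X'=\{x:\Pr(x\in C,\ |C|<l)\geq 2\theta\}$ and $C'=C\cap X'$ on the event $|C|<l$ (else $\varnothing$); an averaging argument shows $|X'|<|X|$, so one may recurse on $(X',C')$ to get $(R',\mu')$, and then set $\mu=\mu'+\frac{1}{l}\ONE_{X}$ and $R=R'\cup(C\setminus X')\cdot\ONE\{|C|<l\}$. The key is that $\mu$ is built \emph{additively across a nested chain of sub-instances}, each contributing a flat $1/l=1/(\theta|X|)$ per point (total $1/\theta$ per level), rather than via a single closed-form expectation; Property~1 is handled level-by-level (if $|C|\geq l$ the crude bound $\mu(x)\geq 1/l\geq 1/|C|$ already suffices, otherwise defer to the recursion), and Property~3 telescopes because each level strictly decreases $\Pr(C\neq\varnothing)$ by at least $\theta$. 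If you want to salvage your approach, you would need to replace $(\star)$ by a bound that integrates over all attained sizes $|C'|$ rather than the single minimum --- at which point you are essentially rediscovering the layered/inductive structure. As written, there is a missing idea, not just a missing calculation.
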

\begin{proof}
We prove this lemma by induction on the size of the set $X$. If $|X|=0$ i.e. $X$ is empty, then properties 1--3 trivially hold. Assume
that the statement holds for all sets $X'$ of size $|X'| < |X|$ and prove it for $X$.

Let $l = \theta |X|$. Define a (deterministic) set $X'$ and a random subset $V'\subset X'$ as follows:
\begin{align}
X' &= \{x:\Pr(x\in V \text{ and } |V| < l) \geq 2\theta\} \label{def:Xprime}\\
V' &=
\begin{cases}
V\cap X', &\text{if } |V| < l\\
\varnothing, &\text{otherwise}
\end{cases}\label{def:Vprime}
\end{align}

First, we prove that for some $x_0 \in X$, $\Pr(x_0 \in V \text{ and } |V|  < l) \leq \theta$ and, consequently,
$|X'|<|X|$. To this end, we show that the average value of $\Pr(x \in V \text{ and } |V|  < l)$ for $x\in X$ is at most
$\theta$:
\begin{multline*}
\frac{1}{|X|}\sum_{x\in X} \Pr(x \in V \text{ and } |V|  < l) = \frac{1}{|X|}\sum_{x\in X} \Exp[\ONE\set{x \in V \text{ and } |V|  < l}] =\\
= \frac{1}{|X|}\Exp\Big[\sum_{x\in X} \ONE\set{x \in V \text{ and } |V|  < l}\Big]
= \frac{1}{|X|}\Exp[|V|\cdot\ONE\set{|V|  < l}]\leq \frac{l}{|X|} = \theta.
\end{multline*}
Here we used that the random variable $|V|\cdot\ONE\set{|V|  < l}$ is always less than $l$.

Since $|X'| < |X|$ and $\Pr(x\in V') = \Pr(x\in V \text{ and } |V|<l) \geq 2
\theta$ for every $x\in X'$, we
can apply the induction hypothesis to $X'$ and $V'$. By the inductive
hypothesis, there exist a random set $R' \subset X'$ and measure $\mu'$ on
$X'$ satisfying properties 1--3 for the set $X'$ and random subset $V'$.
Define a measure $\mu$ on $X$ and random subset $R\subset V$ as follows:
\begin{align}
\mu(x) &= \begin{cases}
\mu'(x) + 1/l, &\text{if } x\in X'\\
1/l, &\text{otherwise}
\end{cases}\label{def:mu}
\\
R &=
\begin{cases}
R' \cup (V\setminus X'), &\text{if } |V| < l\\
R', &\text{otherwise}
\end{cases}\label{def:R}
\end{align}

We verify that $R$ and $\mu$ satisfy the desired conditions.

\paragraph{Condition 1: for every $x\in V\setminus R$, $\mu(x) \geq 1/|V\setminus R|$ (always).} Fix $x\in V\setminus R$ and consider three cases.
If $x\in X'$ and $|V| < l$, we have $V\setminus R = V'\setminus R'$ by (\ref{def:R}). Thus, $\mu(x) > \mu'(x) \geq
\frac{1}{|V'\setminus R'|} = \frac{1}{|V\setminus R|}$ by the induction hypothesis. If $x\in X'$ and $|V| \geq l$, then
$\mu(x) \geq 1/l \geq 1/|V|$. Note that $V' = \varnothing$ (since $|V| \geq l$). Hence, $R \stackrel{\text{by
(\ref{def:R})}} = R' \subset V' = \varnothing$. In particular, $1/|V| = 1/|V\setminus R|$ and thus $\mu(x)  \geq
1/|V\setminus R|$.
\smallskip

Finally, if $x\notin X'$, then  $x\in V\setminus X' \subset R' \cup (V\setminus X')$ and $x \notin R$. Thus, $R \neq R'
\cup (V\setminus X')$. From~(\ref{def:R}), we get that $|V|\geq l$ and hence $\mu(x) = 1/l \geq 1/|V|$. Again, since
$|V| \geq l$, we have $\mu(x) \geq 1/|V| = 1/|V\setminus R|$.

\paragraph{Condition 2: $\Pr(x\in R) \leq 2\theta$.}

If $x\in X'$, then $\Pr(x\in R) = \Pr(x\in R') \leq 2\theta$ by the induction hypothesis.
If $x\notin X'$, then
$$\Pr(x\in R) \stackrel{\text{by (\ref{def:R})}}{=}  \Pr(x\in V \text{ and } |V| < l) \stackrel{\text{by (\ref{def:Xprime})}}{\leq} 2\theta .
$$

\paragraph{Condition 3: $\mu(X) = \sum_{x\in X}\mu(x) \leq \Pr(V\neq \varnothing)/\theta^2$.}
By the inductive hypothesis, we have $\mu'(X') \leq \Pr(V'\neq \varnothing)/\theta^2$. Thus,
$$\mu(X) = \mu'(X') + \frac{|X|}{l} \leq \frac{\Pr(V'\neq \varnothing)}{\theta^2} + \frac{1}{\theta}.$$
To finish the proof, we need to show that
$$\frac{\Pr(V'\neq \varnothing)}{\theta^2} + \frac{1}{\theta} \leq \frac{\Pr(V\neq \varnothing)}{\theta^2},$$
or, equivalently,  $\Pr(V\neq \varnothing) - \Pr(V'\neq \varnothing) \geq \theta$. Note that if $|V|\geq l$, then
$V\neq \varnothing$ and $V'=\varnothing$. Thus (since $V'\subset V$),
$$\Pr(V\neq \varnothing) - \Pr(V'\neq \varnothing) = \Pr(V\neq \varnothing \text{ and } V'= \varnothing)\geq \Pr(|V|\geq l).$$
Recall that for some $x_0\in X$, we have $\Pr(x_0\in V \text{ and } |V|<l)\leq \theta$. Since
for all $x\in X$, $\Pr(x\in X) \geq 2\theta$, we have
$$\Pr(|V|\geq l) \geq \Pr(x_0\in V \text{ and } |V|\geq l)\geq \Pr(x_0\in V) - \Pr(x_0\in V \text{ and } |V|< l)\geq \theta.$$
This completes the proof.
\end{proof}

\begin{corollary}\label{cor:main-measure}
Let $X$ be a finite set and $V\subset X$ be a random subset of $X$. Let $\theta \in (0,1/2)$. Then there exist a random
set $R\subset V$ and measure $\mu$ on $X$ such that
\begin{enumerate}
\item $\mu(x) \geq \frac{1}{|V\setminus R|}$ for every $x\in V\setminus R$ (always)
\item $\Pr(x\in R) \leq 2\theta$ for every $x\in X$
\item $\mu(X) = \sum_{x\in X} \mu(x) \leq \frac{1}{\theta^2}$
\end{enumerate}
\end{corollary}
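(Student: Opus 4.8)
The plan is to reduce directly to Lemma~\ref{lem:main-measure} by separating the points of $X$ according to whether the hypothesis $\Pr(x\in C)\geq 2\theta$ of that lemma holds. Call a point $x\in X$ \emph{heavy} if $\Pr(x\in C)\geq 2\theta$ and \emph{light} otherwise; let $H$ be the set of heavy points and $L=X\setminus H$ the set of light points. Light points are harmless: each of them lies in $C$ with probability strictly less than $2\theta$, so we can afford to dump all of them into $R$ outright.

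First I would apply Lemma~\ref{lem:main-measure} to the finite set $H$ together with the random subset $C\cap H$ of it. This is legitimate because for every $x\in H$ the events $\set{x\in C\cap H}$ and $\set{x\in C}$ coincide, so $\Pr(x\in C\cap H)=\Pr(x\in C)\geq 2\theta$, and $\theta\in(0,1/2)$ as required. The lemma supplies a random set $R_H\subseteq C\cap H$ and a deterministic measure $\mu_H$ on $H$ with: $\mu_H(x)\geq 1/|(C\cap H)\setminus R_H|$ for every $x\in(C\cap H)\setminus R_H$; $\Pr(x\in R_H)\leq\theta$ for every $x\in H$; and $\mu_H(H)\leq \Pr(C\cap H\neq\varnothing)/\theta^2\leq 1/\theta^2$.

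Next I would set $R=R_H\cup(C\cap L)$ and extend $\mu_H$ to a measure $\mu$ on $X$ by $\mu(x)=\mu_H(x)$ for $x\in H$ and $\mu(x)=0$ for $x\in L$. Note $R\subseteq C$ since $R_H\subseteq C\cap H\subseteq C$ and $C\cap L\subseteq C$. The one elementary observation that makes everything work is the set identity $C\setminus R=(C\cap H)\setminus R_H$ (removing $R$ from $C$ deletes all of $C\cap L$ and exactly the part $R_H$ of $C\cap H$). Given this, the three conditions are immediate. Condition 1: for $x\in C\setminus R=(C\cap H)\setminus R_H$ we have $\mu(x)=\mu_H(x)\geq 1/|(C\cap H)\setminus R_H|=1/|C\setminus R|$. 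Condition 2: for $x\in H$, $\Pr(x\in R)=\Pr(x\in R_H)\leq\theta\leq 2\theta$, and for $x\in L$, $\Pr(x\in R)=\Pr(x\in C\cap L)=\Pr(x\in C)< 2\theta$. Condition 3: $\mu(X)=\mu_H(H)\leq 1/\theta^2$.

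There is no genuine obstacle here — this is a routine reduction — but the point to be careful about is precisely the identity $C\setminus R=(C\cap H)\setminus R_H$ underlying Condition 1, together with the degenerate case $H=\varnothing$, in which Lemma~\ref{lem:main-measure} is invoked in its trivial $|X|=0$ form ($R_H=\varnothing$, $\mu_H\equiv 0$) and the argument above goes through verbatim.
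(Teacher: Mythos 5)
Your proof is correct and is essentially the same argument the paper uses: the paper's one-line proof also applies Lemma~\ref{lem:main-measure} to the heavy set $X'=\{x:\Pr(x\in C)\geq 2\theta\}$ and sets $R=(R'\cup(X\setminus X'))\cap C$, which simplifies to your $R_H\cup(C\cap L)$ since $R'\subset C$. You have simply spelled out the set identity $C\setminus R=(C\cap H)\setminus R_H$ and the extension of $\mu$ by zero on $L$, both of which the paper leaves implicit.
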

\begin{proof}
Let $X' = \{x: \Pr(x\in V) \geq 2\theta\}$. We apply the lemma to $X'$
and $V'= V\cap X'$ then let
\begin{align*}
R &= R' \cup (V\setminus X'),\\
\mu(x) &=
\begin{cases}
\mu'(x), &\text{if } x\in X'\\
0, &\text{otherwise}
\end{cases}
\end{align*}
\end{proof}

We will need the following observation.
\begin{observation}\label{observ:alpha}
Let $R$ be as in Corollary~\ref{cor:main-measure} and $V_0 = V\setminus R$. Then for every $S \subset V_0$, we have
$|S| \leq \mu(S) |V_0|$.
\end{observation}
\begin{proof}
By Corollary~\ref{cor:main-measure}, $\mu(x) \geq 1/|V_0|$ for every $x\in S$. Therefore, $\mu(S) \geq |S|/|V_0|$ and
$|S| \leq \mu(S) |V_0|$.
\end{proof}

\subsection{Proof of Theorem~\ref{thm:everywhere-sparse}}
\begin{proof}
Let $\beta = \frac{\theta}{1+\theta}$ and $\theta' = \theta/3$. Applying Corollary~\ref{cor:main-measure} with
parameter $\theta'$, we get a deterministic measure $\mu$ on $V$ and a random set $R \subset V$. Let $V_0 = V \setminus
R$. Denote $M = \mu(X) \leq 1/\theta'^2$. Consider the product measure $\mu^{\otimes 2}$ on $X\times X$, defined by
$\mu^{\otimes 2}((x,y)) = \mu(x) \cdot \mu(y)$ for $x,y\in X$. Then $\mu(X\times X) = M^2$.

Consider random variable $\mu^{\otimes 2}(E) \equiv \mu^{\otimes 2}(\set{(x,y):(x,y)\in E})$.
Since $\Prob{(x,y)\in E} \leq \delta$ for every $x,y\in X$, we have $\E{\mu^{\otimes 2}(E)} \leq \delta M^2$. By
Markov's inequality,
$$\Prob{\mu^{\otimes 2}(E) \geq \beta^2} \leq \frac{\delta M^2}{\beta^2} \leq \frac{5\theta}{16},$$
since $M^2 \leq 1/ \theta'^4 = 81/\theta^4$, $\beta^2 = \theta^2/(1+\theta)^2 \geq 4\theta^2/9$, and $\delta \leq \theta^7/600$ (by the condition of the theorem).

If $\mu^{\otimes 2}(E) \geq \beta^2$, we let $V' = \varnothing$. In this case, $H[V']$ is empty and trivially
$\theta$-sparse. We now consider the main case when $\mu^{\otimes 2}(E) < \beta^2$. We say that $x\in X$ is
\textit{bad} if
\begin{equation}\label{eq:defn-bad-point}
\mu\big(\set{y\in V_0: (x,y)\in E}\big) \geq \beta.
\end{equation}
Denote the set of bad points by $B$ (note that $B$ is a random set). Observe that $B\subset V$, since if $x\notin V$
then $x$ is not connected to any $y$ and thus $\set{y\in V_0: (x,y)\in E} = \varnothing$. Define $V'$ as
$$V' = V \setminus (R \cup B) = V_0 \setminus B.$$
Let us verify that $V'$ satisfies the desired properties. First, we check
that $H[V']$ is $\theta$-everywhere sparse; that is, $\deg_{H[V']} x \leq
\theta |V'|$ for every $x\in V'$ . That is, $|\set{y\in V': (x,y)\in E}|\leq
\theta |V'|$.  To this end, we upper bound the measures and cardinalities of
sets $V_0\cap B$ and $\set{y\in V': (x,y)\in E}$.
For every $x\in B$,
\begin{multline*}
\mu^{\otimes 2}\big(\set{(x,y)\in E}\big) = \mu(x) \cdot \mu\big(\set{y\in V: (x,y)\in E}\big) \geq \\
\geq \mu(x) \cdot \mu\big(\set{y\in V_0: (x,y)\in E}\big) \geq \mu(x) \cdot \beta.
\end{multline*}
Consequently,
$$\mu^{\otimes 2}(E) \geq \sum_{x \in B} \mu^{\otimes 2}\big(\set{(x,y)\in E}\big) \geq \beta \mu(B)$$
and $\mu(B) \leq \mu^{\otimes 2}(E)/\beta < \beta$. In particular,  $\mu(V_0 \cap B) \leq \mu(B) \leq \beta$.

Consider now $x\in V'$. Since $x\notin B$,
$$\mu(\set{y\in V': (x,y)\in E}) \leq \mu(\set{y\in V_0: (x,y)\in E}) \leq \beta.$$
 By Observation~\ref{observ:alpha}, $|\set{y\in V': (x,y)\in E}| \leq
\beta |V_0|$ and $|V_0 \cap B| \leq \mu(V_0\cap B) |V_0| \leq \beta |V_0|$. From the latter inequality, we get $|V'| =
|V_0\setminus B| \geq (1 - \beta) |V_0|$. We conclude that
$$|\set{y\in V': (x,y)\in E}| \leq \frac{\beta}{1-\beta} |V'| =\theta |V'|,$$
as required.

To finish the proof, we need to show that $\Prob{x\notin V\setminus V'} \leq \theta$. Note that $x$ is in $V$ but not
in $V'$ only if one of the following events happens:
\begin{itemize}
  \item $\mu^{\otimes 2}(E) \geq \beta^2$. As we showed above, the probability of this event is at most $5\theta/16$.
  \item $\mu^{\otimes 2}(E) < \beta^2$ and $x\in R$. By Corollary~\ref{cor:main-measure}, the probability of this
      event is at most $2\theta' = 2\theta/3$.
  \item $\mu^{\otimes 2}(E) < \beta^2$ and $x\in B$. By Markov's inequality, the probability of this event is at most
$$\Prob{x\in B} \leq \E{\mu\big(\set{y\in V_0: (x,y)\in E}\big)}/\beta \leq \delta M/\beta \leq \theta/300,$$
here we used that first for $x\in B$, $\mu\big(\set{y\in V_0: (x,y)\in E}\big)/\beta\geq 1$
see~(\ref{eq:defn-bad-point}), second
$$\E{\mu\big(\set{y\in V_0: (x,y)\in E}\big)}\leq \sum_{y\in X}
\Prob{(x,y)\in E} \mu(y) \leq \sum_{y\in X} \delta \mu(y) =\delta M,$$ and third $\delta \leq \theta^7/600$, $M \leq 1/ \theta'^2 = 9/\theta^2$, $\beta = \theta/(1+\theta) \geq 2\theta/3$, and $\theta \in (0, 1/2)$.
\end{itemize}
Thus, $\Prob{x\in V \setminus V'} \leq \left(\frac{5}{16} + \frac{2}{3} + \frac{1}{300}\right)\theta < \theta$.
\end{proof}

\section{One point extension}\label{sec:kirszbraun}
Consider two sets of points $X$ and $Y$ in Euclidean space and a one-to-one map $\varphi:X\to Y$. Suppose that for every point $x$ in $X$,
the distances from $x$ to all but a $\theta$ fraction of $x'$ in $X$ do not increase under the map $\varphi$. In this section, we show that
in this case, we have $\cost_p(Y)\leq (1+O(\theta^{1/(p+1)}))\cost_p(X)$. We prove this statement (Lemma~\ref{lem:main-Kriszbraun-corollary})
using a robust version of the classic Kirszbraun Theorem (Theorem~\ref{thm:robust-Kirszbraun}).
Later we use this Lemma to prove Theorem~\ref{cor:main-Kriszbraun-corollary}.
To state our results, we need to define
a distance expansion graph for the map $\varphi$. This definition is similar to the definition of the distortion graph.


\begin{definition}[Distance expansion graph]
Consider two finite metric spaces $(X,d_X)$ and $(Y,d_Y)$ and a map $\varphi: X \to Y$. Define the distance expansion graph for $\varphi$ on elements of the space
$X$ as follows. A pair of vertices $(x',\,x'')$ is an edge in the graph if and only if
$$d_Y(\varphi(x'),\varphi(x'')) > d_X(x',\,x'').$$
\end{definition}

\begin{theorem}[Robust one point extension theorem for $L_2$ spaces]\label{thm:robust-Kirszbraun}
Consider two finite (multi)sets of points $X\subset \bbR^{d'}$ and $Y\subset
\bbR^{d''}$ and a map $\varphi: X\to Y$. Let $G=(X,E)$ be the distance
expansion graph for $\varphi$ with respect to the Euclidean distance. Suppose
that $G$ is $\theta$-everywhere sparse. Then, for every $u\in \bbR^{d'}$ and
positive $\varepsilon$, there exists $v\in \bbR^{d''}$ such that for all but
possibly a $\theta'(\varepsilon)$ fraction of points $x$ in $X$ we have
$$\|\varphi(x) - v\| \leq  (1+\varepsilon)\|x - u\|,$$
where $\theta'(\varepsilon) = 2(1+\varepsilon)^2 \cdot \theta/\varepsilon$.
\end{theorem}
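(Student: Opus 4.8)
The engine is the one-point case of Kirszbraun's theorem in geometric form: if $a_1,\dots,a_N\in\bbR^{d'}$ and $b_1,\dots,b_N\in\bbR^{d''}$ satisfy $\|b_i-b_j\|\le\|a_i-a_j\|$ for all $i,j$, then for every $a_0\in\bbR^{d'}$ there is $b_0\in\bbR^{d''}$ with $\|b_i-b_0\|\le\|a_i-a_0\|$ for all $i$ (equivalently, the closed balls $B(b_i,\|a_i-a_0\|)$ have a common point); rescaling the $a$'s by $1+\varepsilon$ gives: if $\varphi$ is $(1+\varepsilon)$-Lipschitz on a set $S\subseteq X$, then some $v$ satisfies $\|\varphi(x)-v\|\le(1+\varepsilon)\|x-u\|$ for all $x\in S$. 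One cannot simply delete a few vertices of $G$ to make $\varphi$ $(1+\varepsilon)$-Lipschitz on the rest — killing all expansion can cost almost all of $X$ (imagine $G$ a disjoint union of $\Theta(1/\theta)$ cliques of size $\approx\theta|X|$) — so the argument must exploit the geometry of $u$: a point $x$ far from $u$ has a huge target ball $B\bigl(\varphi(x),(1+\varepsilon)\|x-u\|\bigr)$ and is essentially free.

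Concretely, write $r_x=\|x-u\|$ (the degenerate points $x=u$ are handled separately or by perturbing $u$), and let $v^*$ be a point of $\bbR^{d''}$ minimizing $w\mapsto\max_{x\in X}\|\varphi(x)-w\|/r_x$, with optimal value $\lambda^*$. If $\lambda^*\le 1+\varepsilon$ we are finished, with $v=v^*$ and no outliers at all. Otherwise a standard convexity fact — the minimizer of a finite maximum of norms lies in the convex hull of the centers attaining the maximum — gives $v^*=\sum_{x\in I}\mu_x\,\varphi(x)$, where $I=\{x:\|\varphi(x)-v^*\|=\lambda^* r_x\}$ is the active set and $\mu_x\ge 0$, $\sum_{x\in I}\mu_x=1$.

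I would then plug this into the barycentric identity $\sum_{x,x'}\mu_x\mu_{x'}\|z_x-z_{x'}\|^2=2\sum_x\mu_x\|z_x-\bar z\|^2$ with $\bar z=\sum_x\mu_x z_x$. For $z_x=\varphi(x)$ the barycenter is exactly $v^*$, so the left-hand side equals $2(\lambda^*)^2\sum_{x\in I}\mu_x r_x^2$. Splitting the left-hand side over pairs in $I\times I$: the non-edges of $G$ contribute at most $\sum_{x,x'\in I}\mu_x\mu_{x'}\|x-x'\|^2$, which by the same identity (now with $z_x=x$) and the fact that a barycenter minimizes a weighted sum of squared distances is at most $2\sum_{x\in I}\mu_x\|x-u\|^2=2\sum_{x\in I}\mu_x r_x^2$; an edge $(x,x')\in E$ contributes $\|\varphi(x)-\varphi(x')\|^2\le 2\|\varphi(x)-v^*\|^2+2\|\varphi(x')-v^*\|^2=2(\lambda^*)^2(r_x^2+r_{x'}^2)$, so the total edge contribution is at most $4(\lambda^*)^2\beta\sum_{x\in I}\mu_x r_x^2$, where $\beta$ is the $\mu_x r_x^2$-weighted average over $x\in I$ of $\mu(\Gamma(x)\cap I)$, with $\Gamma(x)$ the $G$-neighborhood of $x$. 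Combining the three facts yields $(\lambda^*)^2\le 1+2(\lambda^*)^2\beta$, so $\lambda^*>1+\varepsilon$ forces $\beta>\beta^*:=\varepsilon(2+\varepsilon)/\bigl(2(1+\varepsilon)^2\bigr)$. Since $\beta$ is a weighted average of the quantities $\mu(\Gamma(x)\cap I)$, at least one of them exceeds $\beta^*$: some active $x^*$ has $\mu(\Gamma(x^*)\cap I)>\beta^*$, i.e.\ more than a $\beta^*\approx\varepsilon$ fraction of the mass of the optimal certificate $\mu$ lies inside the single set $\Gamma(x^*)$, which contains at most $\theta|X|$ points.

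To finish I would iterate: since the obstruction to $\lambda^*\le 1+\varepsilon$ is localized in $\Gamma(x^*)$, delete $\Gamma(x^*)$ from $X$ and rerun the whole argument on the surviving point set with the same $u$ and $\varepsilon$. The deleted sets across rounds are disjoint, so each round removes at most a $\theta$-fraction of the original $X$, and the run stops the first time the optimal value is $\le 1+\varepsilon$, at which point the minimizer $v$ for the surviving set $S$ satisfies $\|\varphi(x)-v\|\le(1+\varepsilon)\|x-u\|$ for \emph{every} $x\in S$; the outliers are precisely the deleted points. The heart of the matter is to bound the number of rounds by $O(1/\beta^*)=O(1/\varepsilon)$, which is at most $\theta'(\varepsilon)/\theta$ (one should also note $\theta\le\theta'$ so the deletions stay within budget throughout). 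Intuitively each round consumes a $\beta^*\approx\varepsilon$ chunk of a mass-$1$ certificate, but the certificate is recomputed after every deletion and the optimal value need not \emph{strictly} drop after removing a single heavy neighborhood, so one has to design a potential that provably decreases by $\Omega(\varepsilon)$ per round. This bookkeeping — and nothing else — is where the $1/\varepsilon$ and $(1+\varepsilon)^2$ factors in $\theta'(\varepsilon)=2(1+\varepsilon)^2\theta/\varepsilon$ are produced, and it is the step I expect to be the main obstacle.
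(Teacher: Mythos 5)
Your local analysis is genuinely close to the paper's: the barycentric identity $\sum_{x,x'}\mu_x\mu_{x'}\|z_x-z_{x'}\|^2=2\sum_x\mu_x\|z_x-\bar z\|^2$ is exactly the closed-form expression~(\ref{eq:opt-weighted-kmeans1})--(\ref{eq:opt-weighted-kmeans2}) the paper uses, the edge/non-edge split and the relaxed triangle inequality $\|\varphi(x)-\varphi(x')\|^2\le 2\|\varphi(x)-v\|^2+2\|\varphi(x')-v\|^2$ appear verbatim in Lemma~\ref{lem:minmax-kirszbraun}, and your inequality $(\lambda^*)^2\le 1+2(\lambda^*)^2\beta$ is the same comparison the paper makes between $\varepsilon'=(1+\varepsilon)^2-1$ and the edge contribution. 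But the gap you flag at the end is a real hole, not bookkeeping. The optimal value $\lambda^*$ of $\min_w\max_x\|\varphi(x)-w\|/r_x$ is monotone non-increasing under deletion but has no reason to drop by $\Omega(\varepsilon)$ per round: after you delete $\Gamma(x^*)$, the new minimizer can have a completely different active set $I$ and certificate $\mu$, and a set of $\le\theta|X|$ points can carry $\mu$-mass $>\beta^*$ without the deletion making any quantitative dent in the next round's problem. Each round does delete at least one point, so the process terminates, but only with the trivial bound of $|X|$ rounds, which is useless for getting $\theta'=O(\theta/\varepsilon)$. I do not see a potential that fixes this, and without one the proof as written does not establish the claim.

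The paper sidesteps the iteration entirely by building the outlier budget into the dual side of the game. Instead of minimizing the uncapped $\max_x$ (which is the same as minimizing over $\lambda$ ranging over the full probability simplex), it restricts to the polytope
$$\Lambda_\eta=\Bigl\{\lambda\ge 0: \sum_{x}\lambda_x=1,\ \lambda_x\le\eta\ \text{for all }x\Bigr\},\qquad \eta=\bigl(\theta'(\varepsilon)n\bigr)^{-1},$$
and shows $\max_{v'}\min_{\lambda\in\Lambda_\eta}\bigl[(1+\varepsilon)^2 f(X,\lambda,u)-f(\varphi(X),\lambda,v')\bigr]\ge 0$ by von Neumann minimax. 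The cap $\lambda_x\le\eta$ plus $\theta$-sparsity gives $\sum_{x'':(x',x'')\in E}\lambda_{x''}\le\theta\eta n$ for every $\lambda\in\Lambda_\eta$, which is precisely your $\beta\le\beta^*$ bound but now valid for \emph{every} feasible $\lambda$ rather than only for the uncapped optimizer; this is why the one-shot argument closes. The $2(1+\varepsilon)^2/\varepsilon$ factor in $\theta'$ comes directly from choosing $\eta$ so that $4\theta\eta n<\varepsilon'/(1+\varepsilon')$, and the final outlier count follows in one line: if the bad set $S=\{x:\|\varphi(x)-v\|>(1+\varepsilon)\|x-u\|\}$ had more than $1/\eta$ elements, the uniform distribution on $S$ would lie in $\Lambda_\eta$ and give $F(v,\lambda)<0$, a contradiction. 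So the mechanism that produces the $1/\varepsilon$ and $(1+\varepsilon)^2$ factors is not a round-counting potential but the choice of cap $\eta$; if you want to rescue your route, the cleanest repair is to replace your $\max_x$ objective with the $\Lambda_\eta$-constrained version and then not iterate at all.
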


First, we show how to derive the main result of this section (Lemma~\ref{lem:main-Kriszbraun-corollary}) from  Theorem~\ref{thm:robust-Kirszbraun} and then
prove Theorem~\ref{thm:robust-Kirszbraun} itself.

\begin{lemma}\label{lem:main-Kriszbraun-corollary}
Consider two finite multisets of points $X\subset \bbR^{d'}$ and $Y\subset
\bbR^{d''}$ of the same size and a one-to-one map $\varphi: X\to Y$.
Let $G=(X,E)$ be the distance expansion graph for $\varphi$ with respect to
the Euclidean distance. Suppose that $G$ is $\theta$-everywhere sparse with
$\theta \leq 1/10^{p+1}$. Then, for every $p\geq 1$, we have the following
inequality on the cost of the clusters $X$ and $Y$:
$$\cost_p(X) \leq (1+3^{p+2}\theta^{1/(p+1)})\cost_p(Y).$$
\end{lemma}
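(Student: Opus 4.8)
My plan is to bound $\cost_p(X)$ by exhibiting, for each optimal center of $Y$, a nearby center in $\bbR^{d'}$ that works well for $X$, and then to control the few "bad" points whose connection cost is not preserved. Let $c_Y$ be the optimal center for $Y$, so $\cost_p(Y) = \sum_{y \in Y} \|y - c_Y\|^p = \sum_{x \in X} \|\varphi(x) - c_Y\|^p$, using that $\varphi$ is a bijection onto the multiset $Y$. Apply Theorem~\ref{thm:robust-Kirszbraun} with $u = c_Y$ (note the roles are set up so that the expansion graph of $\varphi$ is the relevant object, and $G$ is $\theta$ everywhere-sparse by hypothesis). This produces a point $v \in \bbR^{d''}$... wait — I need the center in the \emph{domain} space $\bbR^{d'}$, so I should instead apply the theorem to $\varphi^{-1} : Y \to X$. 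The expansion graph of $\varphi^{-1}$ has an edge $(y', y'')$ exactly when $\|\varphi^{-1}(y') - \varphi^{-1}(y'')\| > \|y' - y''\|$, i.e. exactly when $(\varphi^{-1}(y'), \varphi^{-1}(y''))$ is an edge of $G$; since $G$ is $\theta$ everywhere-sparse and $\varphi$ is a bijection, the expansion graph of $\varphi^{-1}$ is also $\theta$ everywhere-sparse. So apply Theorem~\ref{thm:robust-Kirszbraun} to $\varphi^{-1}$ with $u = c_Y$: there is $v \in \bbR^{d'}$ such that for all but a $\theta'(\eps)$ fraction of $y \in Y$ we have $\|\varphi^{-1}(y) - v\| \le (1+\eps)\|y - c_Y\|$, where $\theta'(\eps) = 2(1+\eps)^2 \theta/\eps$. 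Equivalently, for all but a $\theta'(\eps)$ fraction of $x \in X$, $\|x - v\| \le (1+\eps)\|\varphi(x) - c_Y\|$.

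Next I handle the bad points. Let $B \subset X$ be the set of $x$ violating the above inequality; $|B| \le \theta'(\eps)|X|$. For $x \in B$ I cannot directly bound $\|x - v\|$, so I charge its cost to the good points. A clean way: since $|B|$ is a small fraction of $X$, and the graph is everywhere-sparse so that connection costs cannot be too concentrated, I will instead move each bad point to the center $v$ at a cost comparable to the \emph{average} good connection cost. Concretely, $\cost_p(X) \le \sum_{x \in X}\|x - v\|^p = \sum_{x \notin B}\|x-v\|^p + \sum_{x \in B}\|x - v\|^p$. The first sum is at most $(1+\eps)^p \sum_{x \notin B} \|\varphi(x) - c_Y\|^p \le (1+\eps)^p \cost_p(Y)$. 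For the second sum I bound $\|x - v\|$ via a triangle inequality through an arbitrary good point $x_0$: $\|x - v\| \le \|x - x_0\| + \|x_0 - v\|$, and $\|x - x_0\| \le \|\varphi(x) - \varphi(x_0)\|$ whenever $(x, x_0)$ is a non-edge of $G$. Since every vertex has $G$-degree at most $\theta |X|$, a standard counting argument (a la Hall / averaging) lets me match each bad point $x$ to a good point $x_0 = x_0(x)$ with $(x, x_0) \notin E$, keeping the matching roughly balanced so no good point is reused more than $O(\theta'(\eps) \cdot \theta^{-1} \cdot {}$something$)$ times — actually since there are only $\theta'(\eps)|X|$ bad points and each can avoid all but $\theta|X|$ good partners, I can keep multiplicity $O(1)$ when $\theta'(\eps) \ll 1$. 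Then $\|x - v\|^p \lesssim_p \|\varphi(x) - \varphi(x_0)\|^p + \|x_0 - v\|^p \lesssim_p \|\varphi(x) - c_Y\|^p + \|\varphi(x_0) - c_Y\|^p + \|\varphi(x_0) - c_Y\|^p$ (triangle again in $Y$, splitting through $c_Y$). Summing over $x \in B$, the $\|\varphi(x) - c_Y\|^p$ terms sum to at most $\cost_p(Y)$ and the $\|\varphi(x_0) - c_Y\|^p$ terms, with bounded multiplicity, sum to $O(\cost_p(Y))$, so $\sum_{x\in B}\|x - v\|^p \lesssim_p \cost_p(Y)$.

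Combining, $\cost_p(X) \le \big((1+\eps)^p + C_p\big)\cost_p(Y)$ for an explicit $C_p$ depending on $\eps$, $p$, and the multiplicity bound; the claimed form $\cost_p(X) \le (1 + 3^{p+2}\theta^{1/(p+1)})\cost_p(Y)$ then follows by optimizing the free parameter $\eps$ against $\theta$. Tracking constants, one wants $\eps \asymp \theta'(\eps)^{1/(p+1)} \asymp (\theta/\eps)^{1/(p+1)}$, which balances at $\eps \asymp \theta^{1/(p+2)}$; then $(1+\eps)^p - 1 \asymp p\,\theta^{1/(p+2)}$ and the bad-point contribution is of the same order, giving a bound $1 + O_p(\theta^{1/(p+2)})$. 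Hmm, the lemma claims exponent $1/(p+1)$, so the intended split must be slightly different — probably one sets $\eps$ so that $(1+\eps)^p \approx 1 + \eps p$ with $\eps \asymp \theta^{1/(p+1)}$ directly, accepting a bad fraction $\theta'(\eps) \asymp \theta^{p/(p+1)}$ and using $p \ge 1$ so that $\theta^{p/(p+1)} \le \theta^{1/(p+1)}$ — then the bad-point term is the smaller one. The main obstacle, and the step needing the most care, is precisely this bad-point charging: getting the triangle-inequality losses to only cost a constant factor (using $(a+b)^p \le 2^{p-1}(a^p + b^p)$) and, crucially, arguing via the everywhere-sparseness that the bad points can be matched to good points along non-edges with controlled multiplicity, so that their total connection cost is genuinely $O_p(\cost_p(Y))$ rather than something that blows up. Everything else is bookkeeping and the final choice of $\eps$ to produce the stated constant $3^{p+2}$.
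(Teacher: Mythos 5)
Your very first step cannot go through. To apply Theorem~\ref{thm:robust-Kirszbraun} to $\varphi^{-1}$ you assert that the distance expansion graph of $\varphi^{-1}$ is, under the identification $y\leftrightarrow\varphi^{-1}(y)$, the same graph as $G$ and hence $\theta$ everywhere-sparse. That is backwards: an edge of $G$ is a pair $(x',x'')$ with $\|\varphi(x')-\varphi(x'')\|>\|x'-x''\|$, while an edge of the expansion graph of $\varphi^{-1}$, transported to $X$, is a pair with $\|x'-x''\|>\|\varphi(x')-\varphi(x'')\|$. Up to ties these two conditions are complementary, so the expansion graph of $\varphi^{-1}$ is essentially the \emph{complement} of $G$ and is typically nearly complete. $\theta$-sparsity of $G$ gives you no control over it, so the robust extension theorem is not applicable on the $Y$ side, and the construction of $v$ you rely on is unavailable.

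The deeper issue is that you are trying to prove an inequality the hypotheses cannot support. ``$G$ is $\theta$-sparse'' only says $\varphi$ is approximately non-expansive; this is entirely consistent with $\varphi$ being a uniform contraction (e.g.\ $\varphi(x)=x/2$ makes $G=\varnothing$), in which case $\cost_p(Y)=2^{-p}\cost_p(X)$ and the displayed inequality $\cost_p(X)\le(1+3^{p+2}\theta^{1/(p+1)})\cost_p(Y)$ fails. What the hypotheses do yield, and what the paper's proof in fact establishes in its last line, is the reverse bound $\cost_p(Y)\le(1+3^{p+2}\theta^{1/(p+1)})\cost_p(X)$; the roles of $X$ and $Y$ are transposed in the lemma's displayed conclusion, and Corollary~\ref{cor:main-Kriszbraun-corollary} later recovers both directions by invoking the lemma once for a rescaling of $\varphi$ and once for a rescaling of $\varphi^{-1}$, each of which has a sparse expansion graph under the two-sided preservation hypothesis there. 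The correct move is therefore to apply Theorem~\ref{thm:robust-Kirszbraun} directly to $\varphi$: take $u^*$ to be the optimal center of $X$, obtain $v^*\in\bbR^{d''}$ with $\|\varphi(x)-v^*\|\le(1+\varepsilon)\|x-u^*\|$ for all but a $\theta'=2(1+\varepsilon)^2\theta/\varepsilon$ fraction of $x$, and test $v^*$ as a center for $Y$ via $\cost_p(Y)\le\sum_{x\in X}\|\varphi(x)-v^*\|^p$. For the remaining bad points your bipartite matching idea is more machinery than is needed; the paper's Claim~\ref{cl:Kirszbraun-rerouting} simply averages, for each $x$, the triangle-inequality detour $\|\varphi(x)-v^*\|\le\|x-x'\|+(1+\varepsilon)\|x'-u^*\|$ over all non-neighbors $x'$ of $x$ that lie in the good set $\widetilde X$, of which there is at least a constant fraction, and combines this with Lemma~\ref{ineq:sum-p-pow}; no explicit assignment is needed. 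Your final guess $\varepsilon=\theta^{1/(p+1)}$ is indeed the parameter choice used.
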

\begin{proof}
Fix a parameter $\varepsilon = \theta^{1/(p+1)}\leq 1/10$. Let $u^*$ be the optimal center for the cluster $X$. By Theorem~\ref{thm:robust-Kirszbraun}, there exists a point $v^*\in \bbR^{d''}$ such that for
all but possibly a $\theta' = 2(1+\varepsilon)^2 \cdot \theta/\varepsilon$ fraction of points $x$ in $X$ we have
$$\|\varphi(x) - v^*\| \leq  (1+\varepsilon)\|x - u^*\|.$$
Let $\widetilde{X}$ be the set of points $x$ for which the inequality above holds. Then, $|\widetilde{X}|\geq (1-\theta')|X|$. Let us place
the center of the cluster $Y$ to $v^*$. This will give an upper bound on the cost $\cost_p(Y)$:
\begin{equation}\label{eq:main-in-Kriszbraun-corollary}
\cost_p(Y) \leq \sum_{y\in Y} \|y - v^*\|^p = \sum_{x\in X} \|\varphi(x) - v^*\|^p.
\end{equation}
We now need to estimate the right hand side of~(\ref{eq:main-in-Kriszbraun-corollary}).
For $x\in \widetilde{X}$, we already have a bound: $\|\varphi(x) - v^*\|^p \leq (1+\varepsilon)^p \|x - u^*\|^p$.
We use the following claim to bound $\|\varphi(x) - v^*\|^p$ for $x\in X\setminus \widetilde{X}$.

\begin{claim}\label{cl:Kirszbraun-rerouting}
For all $x\in X$, we have
$$\|\varphi(x) - v^*\|^p\leq (1+\varepsilon)^p \|x - u^*\|^p + \frac{3^p}{\varepsilon^{p-1}} \cdot \frac{2}{|X|}\sum_{x'\in \widetilde{X}}\|x' - u^*\|^p.$$
\end{claim}
\begin{proof}
Fix $x\in X$. Let $I_x$ be the set of its non-neighbors in the distance
expansion graph. I.e., $x'\in I_x$ if $\|\varphi(x) - \varphi(x')\| \leq \|x
- x'\|$. Since the distance expansion graph is $\theta$-everywhere sparse,
the set $I_x$ contains at least $(1-\theta)|X|$ points and  $I_x\cap
\widetilde{X}$ contains at least $(1-\theta-\theta')|X|$ points.  Consider an arbitrary $x'\in I_x\cap
\widetilde{X}$. By the triangle inequality, we have
$$\|\varphi(x) - v^*\|\leq \|\varphi(x) - \varphi(x')\| + \|\varphi(x') - v^*\|.$$
Now, $\|\varphi(x) - \varphi(x')\|\leq \|x - x'\|$ because $x'\in I_x$ and $\|\varphi(x') - v^*\|\leq (1+\varepsilon) \|x' - u^*\|$ because
$x'\in \widetilde{X}$. Thus,
$$\|\varphi(x) - v^*\| \leq \|x - x'\| + (1+\varepsilon)\|x' - u^*\|.$$
Using the triangle inequality once again, we get
$$\|\varphi(x) - v^*\| \leq \big(\|x - u^*\| + \|x' - u^*\|\big) + (1+\varepsilon)\|x' - u^*\| = \|x - u^*\| + (2+\varepsilon)\|x' - u^*\|.$$
By Lemma~\ref{ineq:sum-p-pow} applied to the inequality above,
\begin{align}
\notag \|\varphi(x) - v^*\|^p &\leq (1+\varepsilon)^p \|x - u^*\|^p + \Big(\frac{1+\varepsilon}{\varepsilon}\Big)^{p-1} (2+\varepsilon)^p \|x' - u^*\|^p\\
\label{ineq:claim-triangle-Lp}
&\leq (1+\varepsilon)^p \|x - u^*\|^p + \frac{3^p}{\varepsilon^{p-1}} \|x' - u^*\|^p,
\end{align}
here we used that $(1+\varepsilon)(2+\varepsilon) < 3$ for $\varepsilon \leq 1/10$.

We now average~(\ref{ineq:claim-triangle-Lp}) over all $x'\in I_x\cap \widetilde{X}$ and use the bound
$|I_x\cap \widetilde{X}|\geq (1-\theta-\theta') |X|\geq |X|/2$:
\begin{align*}
\|\varphi(x) - v^*\|^p &\leq (1+\varepsilon)^p \|x - u^*\|^p +
\frac{3^p}{\varepsilon^{p-1}} \frac{1}{|I_x\cap \widetilde{X}|}\sum_{x'\in I_x\cap \widetilde{X}}\|x' - u^*\|^p\\
&\leq (1+\varepsilon)^p \|x - u^*\|^p + \frac{3^p}{\varepsilon^{p-1}} \cdot \frac{2}{|X|}\sum_{x'\in \widetilde{X}}\|x' - u^*\|^p.
\end{align*}
This concludes the proof of Claim~\ref{cl:Kirszbraun-rerouting}.
\end{proof}
We now split the right hand side of (\ref{eq:main-in-Kriszbraun-corollary}) into two sums:
\begin{equation}\label{eq:main-in-Kriszbraun-corollary2}
\cost_p(Y) \leq \sum_{x\in X} \|\varphi(x) - v^*\|^p = \sum_{x\in \widetilde{X}} \|\varphi(x) - v^*\|^p + \sum_{x\in X\setminus\widetilde{X}} \|\varphi(x) - v^*\|^p.
\end{equation}
Write
$$\sum_{x\in \widetilde{X}} \|\varphi(x) - v^*\|^p\leq (1+\varepsilon)^p\sum_{x\in \widetilde{X}} \|x - u^*\|^p,$$
and, using Claim~\ref{cl:Kirszbraun-rerouting},
\begin{eqnarray*}
\sum_{x\in X\setminus \widetilde{X}} \|\varphi(x) - v^*\|^p &\leq& (1+\varepsilon)^p\sum_{x\in X\setminus \widetilde{X}} \|x - u^*\|^p
+ \frac{3^p}{\varepsilon^{p-1}} \cdot \frac{2}{|X|}\sum_{\substack{x\in X\setminus\widetilde{X} \\ x'\in \widetilde{X}}}\|x' - u^*\|^p\\
&=& (1+\varepsilon)^p\sum_{x\in X\setminus \widetilde{X}} \|x - u^*\|^p
+ \frac{3^p}{\varepsilon^{p-1}} \cdot \frac{2|X\setminus \widetilde{X}|}{|X|}\sum_{x'\in \widetilde{X}}\|x' - u^*\|^p\\
&\leq&(1+\varepsilon)^p\sum_{x\in X\setminus \widetilde{X}} \|x - u^*\|^p + \frac{3^p}{\varepsilon^{p-1}} \cdot 2\theta'\sum_{x'\in \widetilde{X}}\|x' - u^*\|^p.
\end{eqnarray*}
Here we used that $|X\setminus \widetilde{X}|\leq \theta' |X|$. We now have bounds for the both terms in the right
hand side of inequality~(\ref{eq:main-in-Kriszbraun-corollary2}). We plug them in and obtain the following upper bound on $\cost_p(Y)$:
\begin{align*}
\cost_p(Y) &\leq
(1+\varepsilon)^p\sum_{x\in X} \|x - u^*\|^p + \frac{3^p}{\varepsilon^{p-1}} \cdot 2\theta'\sum_{x'\in \widetilde{X}}\|x' - u^*\|^p\\
&\leq \left((1+\varepsilon)^p + \frac{2\theta' \cdot 3^p}{\varepsilon^{p-1}}\right) \cost_p(X).
\end{align*}
Observe that for $\varepsilon \leq 1/10$, one has $(1+\varepsilon)^p \leq 1 + 3^p \varepsilon$ and
$$\frac{2\theta' \cdot 3^p}{\varepsilon^{p-1}} = \frac{4\theta (1+\varepsilon)^2 \cdot 3^p}{\varepsilon^{p}}
= 4 (1+\varepsilon)^2 \cdot 3^p \varepsilon \leq 7\cdot 3^p \varepsilon.$$
Therefore,
$$\cost_p(Y) \leq  (1 + 3^{p+2} \varepsilon) \,\cost_p(X).$$
\end{proof}

\subsection{Proof of Theorem~\ref{thm:robust-Kirszbraun}}
We prove Theorem~\ref{thm:robust-Kirszbraun} using a duality argument. Fix a positive $\varepsilon$; and denote the size of $X$ by $n$. Let $\eta = (\theta'(\varepsilon) n)^{-1}$.
Consider the following convex polytope:
$$\Lambda_{\eta} = \{\lambda\in \bbR^X: \sum_{x\in X} \lambda_x = 1; 0 \leq \lambda_{x'} \leq \eta \text{ for all } x'\in X\}.$$
For every $\lambda \in \Lambda_{\eta}$, $u'\in \bbR^{d'}$, and $v'\in \bbR^{d''}$, let
\begin{equation}\label{eq:defn-f}
f(X, \lambda, u') = \sum_{x\in X} \lambda_x \|u' - x\|^2\;\;\; \text{and}\;\;\; f(\varphi(X), \lambda, v') = \sum_{x\in X} \lambda_x \|v' - \varphi(x)\|^2.
\end{equation}
That is, $f(X, \lambda, u')$ is the cost of a single cluster $X$ with a center in $u'$ according to the weighted $k$-means objective. The weight of a point
$x\in X$ is $\lambda_x$. Similarly, $f(\varphi(X), \lambda, v')$ is the cost of the cluster $\varphi(X)$ with a center in $v'$.
The optimal centers for clusters $X$ and $\varphi(X)$ are located at the centers of mass of $X$ and $\varphi(X)$ respectively. Thus, for
a given $\lambda\in \Lambda_{\eta}$, $X$, and $\varphi$, the objective functions $f(X, \lambda, u')$ and $f(\varphi(X), \lambda, v')$ are minimized when
$u' = \sum_x \lambda_x x$ and $v' = \sum_x \lambda_x \varphi(x)$. Consequently (see Section~\ref{closed-form-expr-for-kmeans-cluster} for details),
\begin{align}
\label{eq:opt-weighted-kmeans1} \min_{u'\in \bbR^{d'}} f(X, \lambda, u') &= \sum_{(x',\,x'')\in P} \lambda_{x'}\lambda_{x''} \|x' - x''\|^2;\; \text{and}\\
\label{eq:opt-weighted-kmeans2} \min_{v'\in \bbR^{d''}} f(\varphi(X), \lambda, v') &= \sum_{(x',\,x'')\in P} \lambda_{x'}\lambda_{x''} \|\varphi(x') - \varphi(x'')\|^2,
\end{align}
where $P$ is the set of all \emph{unordered} pairs $(x',\,x'')$ with $x',x''\in X$.

\medskip

Let $F(v', \lambda) = (1+\varepsilon)^2 f(X, \lambda, u) - f(\varphi(X), \lambda, v')$. Our goal is to show that
\begin{equation}\label{eq:minmax-main}
\max_{v'\in \bbR^{d''}} \min_{\lambda\in \Lambda_{\eta}} F(v', \lambda) \geq 0.
\end{equation}

\begin{lemma}\label{lem:minmax-kirszbraun}
Inequality~(\ref{eq:minmax-main}) holds.
\end{lemma}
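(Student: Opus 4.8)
The plan is to prove~(\ref{eq:minmax-main}) by a minimax argument that collapses it to one inequality per $\lambda$. For fixed $v'$ the map $\lambda\mapsto F(v',\lambda)$ is affine, and for fixed $\lambda$ the map $v'\mapsto F(v',\lambda)$ is concave, being a constant minus the convex quadratic $f(\varphi(X),\lambda,v')$ (the leading coefficient $\sum_x\lambda_x=1$ is positive). Also $\Lambda_\eta$ is compact and convex, and for every $\lambda$ the supremum of $F(\cdot,\lambda)$ over $\bbR^{d''}$ is attained at the center of mass $\bar y_\lambda:=\sum_x\lambda_x\varphi(x)$, so one may restrict $v'$ to a compact convex set $K$ containing every $\bar y_\lambda$. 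Then Sion's minimax theorem on $\Lambda_\eta\times K$, together with the fact that shrinking the outer domain only lowers the value, gives $\max_{v'\in\bbR^{d''}}\min_{\lambda}F(v',\lambda)\ge\min_{\lambda}\max_{v'\in K}F(v',\lambda)=\min_{\lambda\in\Lambda_\eta}\bigl((1+\varepsilon)^2 f(X,\lambda,u)-T(\lambda)\bigr)$, where $T(\lambda):=\min_{v'}f(\varphi(X),\lambda,v')=\sum_{(x',x'')\in P}\lambda_{x'}\lambda_{x''}\|\varphi(x')-\varphi(x'')\|^2$ by~(\ref{eq:opt-weighted-kmeans2}). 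So it remains to show $(1+\varepsilon)^2 f(X,\lambda,u)\ge T(\lambda)$ for every $\lambda\in\Lambda_\eta$.

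Fix such a $\lambda$ and set $S(\lambda):=\min_{u'}f(X,\lambda,u')=\sum_{(x',x'')\in P}\lambda_{x'}\lambda_{x''}\|x'-x''\|^2$ by~(\ref{eq:opt-weighted-kmeans1}), so $S(\lambda)\le f(X,\lambda,u)$; I will in fact prove the stronger bound $T(\lambda)\le(1+\varepsilon)^2 S(\lambda)$. Split $P$ into the edge set $E$ of the distance expansion graph and the complement $P\setminus E$. Non-edges satisfy $\|\varphi(x')-\varphi(x'')\|\le\|x'-x''\|$ and so contribute at most $S(\lambda)$ to $T(\lambda)$. For an edge $\{x',x''\}$, everywhere-sparseness leaves at least $(1-2\theta)n$ common non-neighbours of $x'$ and $x''$; since each coordinate of $\lambda$ is at most $\eta=(\theta'(\varepsilon)n)^{-1}$, the set $W$ of common non-neighbours has $\lambda$-mass $\Lambda_W\ge1-2\theta n\eta=1-\varepsilon/(1+\varepsilon)^2$, and for $z\in W$ the triangle inequality for $\varphi$ (both $(x',z)$ and $(z,x'')$ being non-edges) gives $\|\varphi(x')-\varphi(x'')\|\le\|x'-z\|+\|z-x''\|$. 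Averaging this over $z$ with weights $\lambda_z/\Lambda_W$ and applying Jensen to the square, $\|\varphi(x')-\varphi(x'')\|^2\le\Lambda_W^{-1}\sum_{z\in X}\lambda_z(\|x'-z\|+\|z-x''\|)^2$. Summing over edges with weights $\lambda_{x'}\lambda_{x''}$, using $\Lambda_W^{-1}\le1+\varepsilon$, expanding the square and applying AM--GM $2\|x'-z\|\,\|z-x''\|\le\|x'-z\|^2+\|z-x''\|^2$ to the cross term only, and finally using that each vertex's edge-neighbourhood has $\lambda$-mass at most $\theta n\eta=\varepsilon/(2(1+\varepsilon)^2)$ and that $\sum_{x,z}\lambda_x\lambda_z\|x-z\|^2=2S(\lambda)$, the total edge contribution to $T(\lambda)$ comes out at most $\tfrac{2\varepsilon}{1+\varepsilon}S(\lambda)$. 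Hence $T(\lambda)\le\bigl(1+\tfrac{2\varepsilon}{1+\varepsilon}\bigr)S(\lambda)=\tfrac{1+3\varepsilon}{1+\varepsilon}S(\lambda)\le(1+\varepsilon)^2 S(\lambda)$, the last inequality being $1+3\varepsilon\le(1+\varepsilon)^3$.

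The minimax swap is routine once the convexity, compactness, and attainment facts above are recorded. The step I expect to be the real crux is the constant accounting for the edge contribution. Two things must be done exactly right: (i) route $\|\varphi(x')-\varphi(x'')\|$ through a common non-neighbour of \emph{both} endpoints, so that each leg of the detour is a genuinely non-expanded distance, rather than through $u$; and (ii) average over such non-neighbours and apply Jensen to the \emph{square} before expanding it, so that the only lossy inequality used is the single AM--GM bound on the cross term — a premature $(a+b)^2\le2a^2+2b^2$ costs an extra factor of roughly $2$ and makes the estimate fail. When everything is combined, the losses $\Lambda_W^{-1}\le1+\varepsilon$, the AM--GM factor, the two appearances of the neighbourhood-mass bound $\varepsilon/(2(1+\varepsilon)^2)$, and the identity $\sum_{x,z}\lambda_x\lambda_z\|x-z\|^2=2S(\lambda)$ combine to give precisely the factor $\tfrac{2\varepsilon}{1+\varepsilon}$, which is exactly what the calibration $\theta'(\varepsilon)=2(1+\varepsilon)^2\theta/\varepsilon$ (hence $\eta=(\theta'(\varepsilon)n)^{-1}$) was set up to yield.
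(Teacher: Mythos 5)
Your proof is correct, and it takes a genuinely different route in the crucial step. Both proofs begin with the same minimax swap and the same closed-form expressions~(\ref{eq:opt-weighted-kmeans1})--(\ref{eq:opt-weighted-kmeans2}) and split $P$ into edges and non-edges. The paper then keeps everything in the target space: it rewrites the deficit as $\varepsilon'\sum_P - (1+\varepsilon')\sum_E$ with $\varepsilon'=(1+\varepsilon)^2-1$ and bounds $\sum_E$ via Claim~\ref{lem:conv-comb-traingle-ineq}, which routes each distorted pair through \emph{every} $\varphi(x)$ (weighted by $\lambda_x$) using the relaxed triangle inequality $\|a-c\|^2\le 2\|a-b\|^2+2\|b-c\|^2$ in $Y$-space. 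You instead prove the cleaner intermediate statement $T(\lambda)\le(1+\varepsilon)^2 S(\lambda)$, and bound the edge contribution by routing each distorted pair through a \emph{common non-neighbour} $z$ of both endpoints, where both legs are non-expanded, so that the detour can be charged entirely to $X$-space distances; the restriction to common non-neighbours forces a further loss $\Lambda_W^{-1}\le 1+\varepsilon$ which the paper avoids, but you compensate by the stricter comparison $T$ vs.\ $S$. Arithmetic on both sides lands at the same calibration $4\theta\eta n=2\varepsilon/(1+\varepsilon)^2$, so the two arguments give identical constants.

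Two minor remarks. First, you patch the compactness gap in the minimax swap (by restricting $v'$ to a compact convex $K$ containing the centers of mass $\bar y_\lambda$ and invoking Sion); the paper simply cites von~Neumann without addressing the non-compact range of $v'$, so you are slightly more careful here. Second, your parenthetical warning that ``a premature $(a+b)^2\le 2a^2+2b^2$ costs an extra factor of roughly~$2$'' is not quite right: applying $(a+b)^2\le 2a^2+2b^2$ pointwise and then averaging over $z$ yields the same bound as Jensen-then-AM--GM; both orders give $\sum_z w_z(2\|x'-z\|^2+2\|z-x''\|^2)$. This does not affect the correctness of the proof, but the commentary slightly misdescribes where the slack actually lives.
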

\begin{proof}
Observe that functions $f(X,\lambda, u)$ and $f(\varphi(X),\lambda, v)$ are linear in $\lambda$ for fixed $u$ and $v$ and convex in
$u$ and $v$ respectively for a fixed $\lambda$ (see~(\ref{eq:defn-f})). Hence, $v' \mapsto F(v', \lambda)$ is a concave function for every $\lambda$;
and $\lambda \mapsto F(v', \lambda)$ is a linear function for every $v'$. Therefore, by Sion's minimax theorem (a variant of von Neumann's minimax theorem)~\cite[Theorem 4.2]{Sion58}, we have
$$\max_{v'\in \bbR^{d''}} \min_{\lambda \in \Lambda_{\eta}} F(v', \lambda) = \min_{\lambda\in \Lambda_{\eta}} \max_{v'\in \bbR^{d''}} F(v', \lambda).$$
Thus, it suffices to prove that $\max_{v'\in \bbR^{d''}} F(v', \lambda)\geq 0$ for all $\lambda\in \Lambda_{\eta}$.
Fix $\lambda\in\Lambda_{\eta}$. Then,
\begin{eqnarray*}
\max_{v'\in R^{d''}} F(v', \lambda) &=& \max_{v'\in R^{d''}} (1+\varepsilon)^2 f(X, \lambda, u) - f(\varphi(X), \lambda, v')\\
&=& (1+\varepsilon)^2 f(X, \lambda, u) - \min_{v'\in R^{d''}} f(\varphi(X), \lambda, v')\\
&\geq& \min_{u'\in R^{d''}} (1+\varepsilon)^2 f(X, \lambda, u') - \min_{v'\in R^{d''}} f(\varphi(X), \lambda, v').
\end{eqnarray*}
Using formulae (\ref{eq:opt-weighted-kmeans1}) and (\ref{eq:opt-weighted-kmeans2}) for the minimum of the function $f$, we have
\begin{eqnarray*}
\max_{v'\in R^{d''}} F(v', \lambda) &\geq&
(1+\varepsilon)^2 \sum_{(x',\,x'')\in P} \lambda_{x'}\lambda_{x''} \|x' - x''\|^2\ - \sum_{(x',\,x'')\in P} \lambda_{x'}\lambda_{x''} \|\varphi(x') - \varphi(x'')\|^2\\
&=& \sum_{(x',\,x'')\in P} \lambda_{x'}\lambda_{x''}\Big[(1+\varepsilon)^2 \|x' - x''\|^2\ - \|\varphi(x') - \varphi(x'')\|^2\Big].
\end{eqnarray*}
We now split the sum on the right hand side into two parts: the sum over pairs $(x',\,x'')\in E$ and pairs $(x',\,x'')\notin E$. Then,
we upper bound each term in each of the sums. For $(x',\,x'')\in E$, we use a trivial bound
$$(1+\varepsilon)^2 \|x' - x''\|^2\ - \|\varphi(x') - \varphi(x'')\|^2\geq - \|\varphi(x') - \varphi(x'')\|^2.$$
For $(x',\,x'')\notin E$, we have $\|x' - x''\|^2\ \geq \|\varphi(x') - \varphi(x'')\|^2$ by the definition of the distance expansion graph, and
hence
$$(1+\varepsilon)^2\|x' - x''\|^2\ - \|\varphi(x') - \varphi(x'')\|^2\geq ((1+\varepsilon)^2 - 1) \|\varphi(x') - \varphi(x'')\|^2.$$
Denote $\varepsilon' = (1+\varepsilon)^2 - 1$. We obtain the following bound:
\begin{eqnarray}\label{eq:eps-1+eps}
\max_{v'\in R^{d''}} F(v', \lambda) &\geq& \varepsilon' \sum_{(x',\,x'')\in P\setminus E}  \lambda_{x'}\lambda_{x''} \|\varphi(x') - \varphi(x'')\|^2 -
\sum_{(x',\,x'')\in E}  \lambda_{x'}\lambda_{x''} \|\varphi(x') - \varphi(x'')\|^2\\
&=&
\notag \varepsilon' \sum_{(x',\,x'')\in P}  \lambda_{x'}\lambda_{x''} \|\varphi(x') - \varphi(x'')\|^2 -
(1+\varepsilon')\sum_{(x',\,x'')\in E}  \lambda_{x'}\lambda_{x''} \|\varphi(x') - \varphi(x'')\|^2.
\end{eqnarray}
We estimate the second sum  using the following claim.
\begin{claim}\label{lem:conv-comb-traingle-ineq}
For all $x', x''\in X$ and $\lambda \in \Lambda_{\eta}$, we have
$$\|\varphi(x') - \varphi(x'')\|^2\leq 2\sum_{x\in X} \lambda_x \|\varphi(x) - \varphi(x')\|^2  + \lambda_x\|\varphi(x) - \varphi(x'')\|^2.$$
\end{claim}
\begin{proof}
The desired inequality is the convex combination of the relaxed triangle inequalities for squared Euclidean distances (see Corollary~\ref{cor:relax-triangle-inequality}):
$$\|\varphi(x') - \varphi(x'')\|^2\leq 2\Big[\|\varphi(x) - \varphi(x')\|^2  + \|\varphi(x) - \varphi(x'')\|^2\Big]$$
with weights $\lambda_x$. Note that $\sum_{x\in X} \lambda_x = 1$ for each $\lambda\in \Lambda_{\eta}$.
\end{proof}

\medskip

\noindent By Claim~\ref{lem:conv-comb-traingle-ineq},
\begin{align*}
\sum_{(x',\,x'')\in E}  \lambda_{x'}\lambda_{x''} \|\varphi(x') - \varphi(x'')\|^2 &\leq
2\sum_{(x',\,x'')\in E}\lambda_{x'}\lambda_{x''} \sum_{x\in X} \Big(\lambda_x \|\varphi(x) - \varphi(x')\|^2  + \lambda_x\|\varphi(x) - \varphi(x'')\|^2\Big)\\
&=2\sum_{x, x'\in X} \left[\sum_{x'': (x',\,x'')\in E}\lambda_{x''}\right] \lambda_x \lambda_{x'} \|\varphi(x) - \varphi(x')\|^2.
\end{align*}
Since the degree of every vertex $x'$ in the distance expansion graph is at most $\theta n$ and each $\lambda_{x''}\leq \eta$, we have
$\left[\sum_{x'': (x',\,x'')\in E}\lambda_{x''}\right]\leq \theta\eta n$. Therefore,
\begin{multline*}
\sum_{(x',\,x'')\in E}  \lambda_{x'}\lambda_{x''} \|\varphi(x') - \varphi(x'')\|^2 \leq
2\theta \eta n \sum_{x,x'\in X} \lambda_x \lambda_{x'} \|\varphi(x) - \varphi(x')\|^2 = \\=
4\theta \eta n \sum_{(x',x'')\in P} \lambda_{x'} \lambda_{x''} \|\varphi(x') - \varphi(x'')\|^2.
\end{multline*}
Note that $4\theta \eta n = 2\varepsilon/(1+\varepsilon)^2 < \varepsilon'/(1+\varepsilon')$.
We plug in the bound above in Equation~(\ref{eq:eps-1+eps}) and obtain the desired inequality:
$$
\max_{v'\in R^{d''}} F(v', \lambda) \geq
\varepsilon' \sum_{(x',\,x'')\in P}  \lambda_{x'}\lambda_{x''} \|\varphi(x') - \varphi(x'')\|^2 -
\frac{(1+\varepsilon')\cdot\varepsilon'}{1+\varepsilon'} \sum_{(x',\,x'')\in P}  \lambda_{x'}\lambda_{x''} \|\varphi(x') - \varphi(x'')\|^2 = 0.
$$
\end{proof}
Let us now return to the proof of Theorem~\ref{thm:robust-Kirszbraun}.
Let $v$ be the point that maximizes the functional $\min_{\lambda\in \Lambda_{\eta}} F(v, \lambda)$. By Lemma~\ref{lem:minmax-kirszbraun},
$\min_{\lambda\in \Lambda_{\eta}} F(v, \lambda) \geq 0$ or, in other words, $F(v, \lambda) \geq 0$ for all $\lambda \in \Lambda_{\eta}$.
Consider the set $S=\{x\in X: \|\varphi(x)-v \| > (1+\varepsilon) \|x - u\|\}$. If $S=\varnothing$ then we are done. Otherwise,
define a vector $\lambda^*$ as follows
$$\lambda^*_x =\begin{cases}
                1/|S|, & \mbox{if } x\in S\\
                0, & \mbox{otherwise}.
              \end{cases}$$
Note that $(1+\varepsilon)^2 \|x - u\|^2 - \|\varphi(x)-v \|^2$ is negative for all $x\in S$ (by the definition of $S$). Hence,
$$F(v, \lambda^*) = \frac{1}{|S|} \sum_{x\in S} (1+\varepsilon)^2 \|x - u\|^2 - \|\varphi(x)-v \|^2 < 0$$
and, consequently, $\lambda^*\notin \Lambda_{\eta}$. Therefore, $1/|S|>\eta$ (otherwise, $\lambda^*$ would belong to $\Lambda_{\eta}$) and
$|S| < 1/\eta = \theta'(\varepsilon) n$.
This finishes the proof of Theorem~\ref{thm:robust-Kirszbraun} since for all $x\notin S$, we have $\|\varphi(x)-v \| \leq (1+\varepsilon) \|x - u\|$.

\begin{proof}[Proof of Theorem~\ref{cor:main-Kriszbraun-corollary}]
Let $X = \tilde C$ and $Y =\varphi(\tilde C)$. We apply Lemma~\ref{lem:main-Kriszbraun-corollary} to maps $(1+\varepsilon) \varphi$ and $(1+\varepsilon)\varphi^{-1}$ and get the desired result.
\end{proof}

\section*{Acknowledgments}
We thank Piotr Indyk, Jerry Li, Jelani Nelson, and Tal Wagner for useful discussions.
\bibliographystyle{alpha}
\bibliography{bib}
\appendix
\section{Inequality for the sum of \texorpdfstring{$p$}{p}-th powers}
\label{sec:ineq:sum-p-pow}
\begin{lemma}\label{ineq:sum-p-pow}
Let $x$ and $y_1, \dots, y_r$ be non-negative real numbers, and $\varepsilon > 0$, $p \geq 1$.
Then
$$
\left(x+\sum_{i=1}^r y_i\right)^p \leq (1+\eps)^{p-1} x^p +  \left(\frac{(1+\eps)r}{\eps}\right)^{p-1} \sum_{i=1}^r y_i^p.$$
\end{lemma}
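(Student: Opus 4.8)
The plan is to derive the inequality from a single application of Jensen's inequality (equivalently, the weighted power–mean inequality) for the convex function $t \mapsto t^p$ on $[0,\infty)$, after choosing convex weights that are tuned to produce exactly the constants $(1+\eps)^{p-1}$ and $\left(\frac{(1+\eps)r}{\eps}\right)^{p-1}$.

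First I would set $w_0 = \frac{1}{1+\eps}$ and $w_i = \frac{\eps}{(1+\eps)r}$ for $i = 1,\dots,r$, and check that these are strictly positive and sum to $1$:
$$w_0 + \sum_{i=1}^r w_i = \frac{1}{1+\eps} + r\cdot\frac{\eps}{(1+\eps)r} = \frac{1}{1+\eps} + \frac{\eps}{1+\eps} = 1.$$
Then I would write the left-hand side as a convex combination of the nonnegative numbers $x/w_0$ and $y_i/w_i$, namely $x + \sum_{i=1}^r y_i = w_0\cdot(x/w_0) + \sum_{i=1}^r w_i\cdot(y_i/w_i)$, and apply the convexity of $t\mapsto t^p$ (valid since $p \geq 1$):
$$\left(x+\sum_{i=1}^r y_i\right)^p \leq w_0\left(\frac{x}{w_0}\right)^p + \sum_{i=1}^r w_i\left(\frac{y_i}{w_i}\right)^p = w_0^{1-p}\,x^p + \sum_{i=1}^r w_i^{1-p}\,y_i^p.$$
Substituting $w_0^{1-p} = (1+\eps)^{p-1}$ and $w_i^{1-p} = \left(\frac{(1+\eps)r}{\eps}\right)^{p-1}$ gives exactly the claimed bound.

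An alternative, perhaps more transparent, route is to split the argument into two steps: first apply the two-term case $(a+b)^p \leq (1+\eps)^{p-1}a^p + \left(\frac{1+\eps}{\eps}\right)^{p-1}b^p$ — itself the $r=1$ instance, proved by the same weighting with weights $\frac{1}{1+\eps}$ and $\frac{\eps}{1+\eps}$ — taking $a = x$ and $b = \sum_{i=1}^r y_i$; then bound $\left(\sum_{i=1}^r y_i\right)^p \leq r^{p-1}\sum_{i=1}^r y_i^p$ by Jensen again with uniform weights $1/r$. Combining the two yields the constant $\left(\frac{1+\eps}{\eps}\right)^{p-1} r^{p-1} = \left(\frac{(1+\eps)r}{\eps}\right)^{p-1}$ on the $y$-terms.

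There is no real obstacle here; the only point requiring care is to pick the weights so that the coefficients come out exactly as stated (rather than merely up to constants), and to observe that the degenerate cases — $x = 0$, some $y_i = 0$, or all $y_i = 0$ — cause no trouble since every weight is strictly positive and $t\mapsto t^p$ is continuous and convex on all of $[0,\infty)$.
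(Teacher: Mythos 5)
Your proof is correct and essentially identical to the paper's: both apply Jensen's inequality for $t\mapsto t^p$ to the convex combination with weights $\frac{1}{1+\eps}$ on $x/w_0$ and $\frac{\eps}{(1+\eps)r}$ on each $y_i/w_i$ (the paper writes the same computation with $t=\frac{1}{1+\eps}$, pulling out a factor of $1/t^p$ before invoking Jensen, but the weighting and the resulting constants are the same). Your presentation as a direct convex combination is slightly cleaner, and your two-step alternative is a nice remark, but there is no substantive difference in the argument.
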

\begin{proof}
Let $t = \frac{1}{1+\eps}$. Write,
$$\left(x + \sum_{i=1}^r y_i\right)^p = \frac{1}{t^p} \left(t x + \sum_{i=1}^r \frac{1-t}{r} \left(\frac{r\, t \,  y_i}{1-t}\right)\right)^p.
$$
The expression in the parentheses on the right is a convex combination of numbers $x, \frac{r t y_1}{1-t} ,\dots, \frac{r t y_r}{1-t}$ with coefficients $t, \frac{1-t}{r}, \dots, \frac{1-t}{r}$ (which add up to 1). Applying Jensen's inequality, we get
$$\frac{1}{t^p} \left(t x + \sum_{i=1}^r \frac{1-t}{r} \left(\frac{r\, t\,  y_i}{1-t}\right)\right)^p
\leq \frac{t x^p}{t^p} +\frac{1-t}{r\,t^p} \sum_{i=1}^r \left(\frac{r\, t\, y_i}{1-t}\right)^p =
(1+\eps)^{p-1} x^p +  \left(\frac{(1+\eps)r}{\eps}\right)^{p-1} \sum_{i=1}^r y_i^p.$$
\end{proof}

\begin{corollary}[Relaxed triangle inequalities]\label{cor:relax-triangle-inequality}
For any vectors $u$, $v$, $w$ and numbers $\varepsilon > 0$, $p\geq1$, we have
\begin{enumerate}
\item
$$\|u-w\|^p \leq (1+\varepsilon)\|u-v\|^p + \left(\frac{(1+\eps)}{\eps}\right)^{p-1} \|v-w\|^p.$$
\item
$$\|u-w\|^2 \leq 2 \|u-v\|^2 + 2\|v-w\|^2.$$
\end{enumerate}
\end{corollary}
\begin{proof}
Using Lemma~\ref{ineq:sum-p-pow}, we get
$$\|u-w\|^p \leq (\|u-v\| + 2\|v-w\|)^p \leq (1+\varepsilon)\|u-v\|^p + \left(\frac{(1+\eps)}{\eps}\right)^{p-1} \|v-w\|^p.$$
Item 2 is a special case of this inequality with $\varepsilon = 1$ and $p=2$.
\end{proof}

\section{Closed-form expression for the cost of a cluster}\label{closed-form-expr-for-kmeans-cluster}
In this section, we derive a well-known
formula~(\ref{eq:opt-weighted-kmeans1}) for computing the cost of a cluster
with respect to the $k$-means objective. The optimal center of the cluster
formed by points in set $X$ with weights $\lambda_x$ is located in the center
of mass of $X$. Let $a$ and $b$ be i.i.d random variables such as $\Pr(a =
x)=\Pr(b=x) = \lambda_x$. Then, the cost of the cluster $X$ equals $\Exp[(a -
\Exp a)^2]= \Var[a]$ and
$$\sum_{(x',\,x'')\in P} \lambda_{x'}\lambda_{x''} \|x' - x''\|^2 = \frac{1}{2}\sum_{(x',\,x'')\in X\times X} \lambda_{x'}\lambda_{x''} \|x' - x''\|^2
= \frac{1}{2}\Exp\|a - b\|^2 = \Exp a^2 - (\Exp a)^2 = \Var[a].$$

\section{Sub-Gaussian Tailed Dimension Reduction}
In this section, first we prove that every sub-Gaussian tailed dimension reduction is standard. Then we show that
the Gaussian dimension reduction is sub-Gaussian tailed.
\begin{lemma}\label{lem:sub-gauss-is-standard}
Let $\varepsilon < 1/2$.
Assume that a family of random maps $\pi_{m,d} \colon \bbR^m \to \bbR^d$ satisfies the condition of Theorem~\ref{distributional_jl}
and is sub-Gaussian tailed (satisfies Definition~\ref{def_standard}).
Then $\pi_{m,d}$ is a standard dimension reduction (see Definition~\ref{def:dim-red-parameters}).
\end{lemma}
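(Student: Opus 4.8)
The plan is to reduce everything to a one-dimensional tail computation for the random variable $Z := \|\pi u\|$ where $u$ is a unit vector. Since $\pi$ is linear and positively homogeneous, for every $x \ne y$ we have $\|\pi(x)-\pi(y)\|/\|x-y\| = \|\pi u\|$ with $u = (x-y)/\|x-y\|$, so both the distortion event and the quantity inside $\rho$ in Definition~\ref{def:dim-red-parameters} are governed by $Z$ over unit vectors. The two-sided $(\eps,\delta)$-guarantee with $\delta \le \exp(-\Omega(\eps^2 d))$ is immediate from the hypothesis that $\pi_{m,d}$ satisfies Theorem~\ref{distributional_jl}: the relation $d = O(\log(1/\delta)/\eps^2)$ there is equivalent to $\delta \le \exp(-\Omega(\eps^2 d))$. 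So the only real work is to bound the moment term $\rho$ from the sub-Gaussian tail of Definition~\ref{def_standard}, and then to match the ``standard'' regime $d > c'p/\eps^2$.

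For the moment term, fix a unit vector $u$, write $Z = \|\pi u\|$ and set $W := \ONE\set{Z > 1+\eps}\,(Z^p - (1+\eps)^p) \ge 0$; the goal is $\E W \le \exp(-\Omega(\eps^2 d))$. First I would rewrite $\E W$ with the layer-cake formula: for $s > 0$ the event $\set{W > s}$ equals $\set{Z > ((1+\eps)^p + s)^{1/p}}$ (the threshold already exceeds $1+\eps$), so after the substitution $1+t = ((1+\eps)^p + s)^{1/p}$,
$$\E W = \int_{\eps}^{\infty} \Pr(Z > 1+t)\,p\,(1+t)^{p-1}\,dt.$$
Plugging in the sub-Gaussian bound $\Pr(Z > 1+t) \le e^{-c_0 t^2 d}$ (with $c_0 > 0$ the constant hidden in Definition~\ref{def_standard}) reduces the claim to the purely analytic estimate $\int_{\eps}^{\infty} p\,(1+t)^{p-1} e^{-c_0 t^2 d}\,dt \le \exp(-\Omega(\eps^2 d))$.

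To prove that estimate I would split the integral at $t=1$. On $[\eps,1]$ bound $(1+t)^{p-1} \le 2^{p-1}$ and use the elementary Gaussian tail $\int_{\eps}^{\infty} e^{-c_0 t^2 d}\,dt \le \frac{1}{2c_0 \eps d}e^{-c_0 \eps^2 d}$. On $[1,\infty)$ bound $(1+t)^{p-1} \le (2t)^{p-1}$, factor $e^{-c_0 t^2 d} = (t^{p-1}e^{-c_0 t^2 d/2})\cdot e^{-c_0 t^2 d/2}$, observe that $\sup_{t>0} t^{p-1}e^{-c_0 t^2 d/2} \le 1$ whenever $d$ exceeds a constant multiple of $p/c_0$ (which holds in the regime $d > c'p/\eps^2$), and integrate the surviving factor using $e^{-c_0 t^2 d/2} \le e^{-c_0 t d/2}$ for $t \ge 1$. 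In that regime the spurious factors $1/(\eps d)$ and $1/d$ are at most $1/c'$, so both pieces come out at most $C_1\,p\,2^p\,e^{-c_0 \eps^2 d/2}$ for a constant $C_1 = C_1(c_0)$. Finally, $d > c'p/\eps^2$ gives $\eps^2 d > c'p$, hence $e^{c_0 \eps^2 d/4} \ge e^{c_0 c' p/4}$, which exceeds $C_1\,p\,2^p$ once $c'$ is a large enough constant depending only on $c_0$; this yields $\rho \le \E W \le e^{-(c_0/4)\eps^2 d}$. Taking $c = \min\{c_0/4,\, c_{\mathrm{JL}}\}$, where $c_{\mathrm{JL}}$ is the constant coming from Theorem~\ref{distributional_jl}, and $c'$ as above, shows $\pi_{m,d}$ is standard.

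The only non-routine point is this last step: absorbing the exponential-in-$p$ prefactor $p\,2^p$, together with the leftover powers of $d$, into the exponential saving. The mechanism is exactly that the hypothesis $d > c'p/\eps^2$ trades a constant fraction of the exponent $\Omega(\eps^2 d)$ for $\Omega(c'p)$, and for $c'$ chosen large in terms of $c_0$ this beats $C^p$ for any fixed $C$; everything else is an elementary integral computation. The constants $c, c'$ produced this way depend only on the parameters of $\pi$, consistent with the constant $C$ in the dimension bound of Theorem~\ref{thm:main}.
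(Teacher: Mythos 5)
Your proposal is correct and follows essentially the same route as the paper: reduce to the one-dimensional random variable $\|\pi u\|$ for a unit vector $u$, write the moment term $\rho$ as $\int_{\eps}^{\infty} p(1+t)^{p-1}\Pr(\|\pi u\|>1+t)\,dt$, plug in the sub-Gaussian tail, and use the hypothesis $d \gtrsim p/\eps^2$ to absorb the $p$-dependent prefactors into the exponential decay. The only difference is cosmetic: you split the integral at $t=1$ and treat each range separately, whereas the paper bounds the single factor $g(t)=p(1+t)^{p-1}e^{-ct^2d/2}$ by $1$ uniformly on $[\eps,\infty)$ and then integrates the remaining Gaussian tail.
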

\begin{proof}
Denote the parameters of dimension reduction $\pi_{m,d}$ by $(\varepsilon, \delta, \rho)$.
Since $\pi_{m,d}$ satisfies the condition of Theorem~\ref{distributional_jl},
$\delta \leq \exp(-c \varepsilon^2 d)$ for some $c$.

Let $u$ be a unit vector in $\bbR^m$ and $\xi = \|\pi(u)\| - 1$. Since
$\pi_{m,d}$ is sub-Gaussian tailed,  $\Prob{\xi > t} \leq \exp(-c t^2 d)$ for
some $c$. We assume that $d \geq c(p-1)/\eps^2$. We have,
\begin{align*}
\Exp&\left[\ONE\set{\set{\|\pi(u)\| > (1+\eps) \|u\|} }
\Bigl(\frac{\|\pi(u)\|^p}{\|u\|^p} - (1+\eps)^p \Bigr) \right]\\
 &=
\E{ \ONE\set{\xi > {\eps}} ((1+\xi)^p - (1+\eps)^p)}
=
\int_{\eps}^{\infty} ((1+t)^p - (1+\eps)^p)\, d\Prob{\xi \leq t}\\
&\stackrel{\text{\tiny integration by parts}}{=}
\left.\Bigl(((1+t)^p - (1+\eps)^p)\cdot (-\Prob{\xi > t})\Bigr)\right|^{t=\infty}_{t=\varepsilon} +
\int_{\eps}^{\infty} p(1+t)^{p-1} \Prob{\xi > t} dt\\
&\leq 0 + \int_{\eps}^{\infty} p(1+t)^{p-1} e^{-c t^2 d} dt = \int_{\eps}^{\infty} (p(1+t)^{p-1} e^{-c t^2 d/2}) e^{-c t^2 d/2}dt
\end{align*}
By differentiating $g(t) = p(1+t)^{p-1} e^{-c t^2 d/2}$ by $t$, we get that $g'(t)$  is decreasing when
$t(1+t) \geq \frac{p-1}{cd}$. Since $d \geq \frac{c(p-1)}{\eps^2}$,
$g(t)$ attains its maximum on $[\eps, \infty)$
when $t =\eps$. We have
$$g(t) \leq p(1+\eps)^{p-1} e^{-c \eps^2 d/2} \leq
e^{-(p-1)/\eps^2 + \ln p + (p-1)\ln (1+\eps)} \leq 1$$
(here, we used that $\eps \leq 1/2$ and $d \geq c(p-1)/\eps^2$.
Therefore,
\begin{align*}
\E{ \ONE\set{\xi > {\eps}} ((1+\xi)^p - (1+\eps)^p)} &\leq
\int_{\eps}^{\infty} e^{-c t^2 d/2}dt \leq
\int_{\eps}^{\infty} \frac{t}{\eps}\, e^{-c t^2 d/2}dt \\ &=
\frac{1}{c d \eps} \int_{\eps^2/2}^{\infty}  e^{-c d s}ds =
\frac{1}{c d \eps} e^{-c d \eps^2/2} < e^{-c d \eps^2/2}.
\end{align*}
\end{proof}

Consider a $d\times m$ matrix $G$ whose entries are i.i.d. Gaussian random variables ${\cal N}(0,1)$. Matrix $G$ defines a linear dimension reduction $\pi(u) = \frac{Gu}{\sqrt{d}}$~\cite{indyk1998approximate}.
\begin{claim}\label{claim:gaussian-is-standard}
The Gaussian dimension reduction $\pi$ (defined above) is sub-Gaussian tailed.
\end{claim}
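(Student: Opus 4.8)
The plan is to reduce the claim to the standard Chernoff bound for chi-squared random variables. Fix a unit vector $x\in\bbR^m$. The first step is to identify the distribution of $\|\pi x\|$: since the entries of $G$ are i.i.d.\ $\mathcal N(0,1)$ and $\|x\|=1$, each coordinate $(Gx)_i=\sum_{j=1}^m G_{ij}x_j$ is an $\mathcal N(0,1)$ random variable, and these coordinates are mutually independent because distinct rows of $G$ are independent. Hence $\|Gx\|^2=\sum_{i=1}^d (Gx)_i^2$ is a chi-squared random variable with $d$ degrees of freedom, and $\|\pi x\|^2=\|Gx\|^2/d$.

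The second step is to bound the upper tail. For $t>0$ the event $\{\|\pi x\|\ge 1+t\}$ is the same as $\{\|Gx\|^2\ge d(1+t)^2\}$, so I would apply a Chernoff bound using the moment generating function $\Exp[e^{sZ^2}]=(1-2s)^{-1/2}$ of a standard Gaussian $Z$ (valid for $s<1/2$) together with independence of the $d$ coordinates. This gives, for every $a\ge 1$,
$$
\Pr\!\left(\|Gx\|^2\ge da\right)\le \inf_{0<s<1/2} e^{-sda}(1-2s)^{-d/2}=\exp\!\left(-\tfrac{d}{2}\bigl(a-1-\ln a\bigr)\right),
$$
where the infimum is attained at $s=\tfrac{a-1}{2a}\in(0,1/2)$.

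The final step is to substitute $a=(1+t)^2$ and clean up the exponent. Since $\ln(1+t)\le t$, we have $(1+t)^2-1-2\ln(1+t)=2t+t^2-2\ln(1+t)\ge t^2$, and therefore $\Pr(\|\pi x\|\ge 1+t)\le e^{-t^2 d/2}$, which is exactly the $e^{-\Omega(t^2 d)}$ bound required by Definition~\ref{def_standard}. I do not expect a genuine obstacle in this argument: the only points that need a little care are the reduction to an honest $\chi^2_d$ variable (which is where unitarity of $x$ is used) and the elementary inequality $(1+t)^2-1-2\ln(1+t)\ge t^2$ that converts the chi-squared tail bound into a clean sub-Gaussian tail; the rest is the textbook computation of the chi-squared concentration inequality.
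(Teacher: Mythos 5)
Your proof is correct and follows the same overall strategy as the paper: reduce to the observation that $d\,\|\pi x\|^2 = \|Gx\|^2$ is $\chi^2_d$ (using unitarity of $x$ and independence of the rows of $G$), and then apply a chi-squared tail bound. The only difference is the concentration tool: the paper invokes the Laurent--Massart inequality $\Pr(\chi^2_d - d \geq 2\sqrt{dx}+2x)\le e^{-x}$ as a black box and substitutes $x=t^2d/2$, whereas you rederive the Chernoff bound $\Pr(\chi^2_d\ge da)\le\exp(-\tfrac d2(a-1-\ln a))$ directly from the Gaussian moment generating function and then use the elementary inequality $(1+t)^2-1-2\ln(1+t)\ge t^2$. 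Both routes produce the identical bound $e^{-t^2d/2}$; yours is a bit more self-contained since it does not rely on an external citation, while the paper's is shorter because Laurent--Massart already packages the optimization. Either way the substance is the same, and your verification of where $s=\tfrac{a-1}{2a}\in(0,1/2)$ and of the final inequality is sound.
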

\begin{proof}
Let $u$ be a unit vector in $\bbR^m$, $\xi = \|\pi(u)\| - 1$ and $\eta = \sqrt{d}(\xi+1) = \sqrt{d}\|\pi(u)\|$.
Note that $\eta^2$ has the $\chi^2$-distribution with $d$ degrees of freedom.
As was shown by  Laurent and Massart~\cite[Lemma 1, Inequality (4.3)]{LM00},
$\Prob{\eta^2 - d \geq 2\sqrt{d\cdot x} + 2x} \leq e^{-x}$ for any positive $x$.  Plugging in $x =  t^2 d/2$ (where $t > 0$),
we get
\begin{align*}
\Prob{\xi \geq t} &= \Prob{(\xi + 1)^2  \geq 1+ 2t + t^2} \leq
\Prob{(\xi + 1)^2 - 1 \geq \sqrt{2}t + t^2}\\&
 = \Prob{\eta^2 - d \geq 2\sqrt{dx} + 2x} \leq e^{-x} = e^{-t^2d/2}.
\end{align*}
\end{proof}

\section{Fast dimension reduction}
\label{fast_jl_appendix}

In this section, we show that a version of Theorem~\ref{main_th_informal} holds
for \emph{fast} dimension reduction maps introduced by Ailon and
Chazelle~\cite{ailon2006approximate} albeit
with somewhat worse parameters. More precisely, the main result of this
section is the following theorem.

\begin{theorem}
  \label{fast_jl_main_thm}
  There exists a family of maps $\pi_{m, d} \colon \mathbb{R}^m \to \mathbb{R}^d$
  that take time $O(m \log m)$ to apply to a given vector $x \in \mathbb{R}^m$
  such that the conclusion of Theorem~\ref{main_th_informal} holds with
  \begin{equation}
  \label{fast_jl_bound}
  d = O\left(\frac{p^6 \cdot \log^2 \frac{k}{\varepsilon \delta}}{\varepsilon^2}\right).
  \end{equation}
\end{theorem}

Let us first describe the construction of the map $\pi_{m,d}$.
By padding vectors with zeros, we can assume that $m$ is a power of two.
Let $D$ be a diagonal $m \times m$ matrix with i.i.d.\ uniform $\pm 1$ entries,
$H$ be a normalized Hadamard transform (that is, $H$ is an orthogonal matrix
with all the entries being $\pm 1 / \sqrt{m}$, it can be chosen
in a way that the map $z \mapsto Hz$ can be computed in time $O(m \log m)$~\cite{ailon2006approximate}), and $S$
is a diagonal matrix with i.i.d.\ entries such that
$\mathrm{Pr}\left[S_{ii} = \sqrt{m / d}\right] = d / m$ and $0$ with the remaining probability.
Denote $\Pi = SHD$, which is a map from $\mathbb{R}^m$ to $\mathbb{R}^m$.
For every unit vector $x$, we have $\mathbb{E}[\|\Pi x\|_2^2] = 1$.
The image of $\Pi$ is a random subspace spanned by several standard basis vectors; note that its
expected dimension is $d$. We define $\pi_{m, d}$ as the projection on this subspace.
Note that this map is not exactly what we want, since the dimension of the image
of $\pi_{m, d}$ is $d$ merely \emph{in expectation.}
However, with probability at least $0.9$, the dimension is at most $10d$,
and conditioning on this event can increase the probability of any event
by at most $10/9$, thus it is enough to show that
Theorem~\ref{fast_jl_main_thm} holds for $\Pi$ if we set $d$ as in~(\ref{fast_jl_bound}).

We start the proof with a statement that is implicitly contained
in~\cite{cohen2015optimal}.
Let us fix  a $p \geq 1$ and denote $\|Y\|_{L_p} = \mathbb{E}\left[|Y|^p\right]^{1/p}$ for a random variable $Y$.

\begin{theorem}[Implicit in Cohen, Nelson, and Woodruff~\cite{cohen2015optimal}]
  \label{moment_bound}
  Let $Y$ be a random variable.
  Let $X_1, X_2, \ldots, X_n$ be non-negative random variables that are also independent
  conditioned on $Y$.
  Denote $S = \mathbb{E}\Bigl[\sum_i X_i\Bigr]$ and $K = p \cdot \left\|\max_i X_i\right\|_{L_p}$.
  Suppose that $\mathbb{E}\Bigl[\sum_i X_i \mathrel{\Big|} Y\Bigr] = S$ holds almost surely.
  Then,
  $$
  \Bigl\|\sum_i X_i - S\Bigr\|_{L_p} \lesssim K + \sqrt{KS}.
  $$
\end{theorem}
\begin{proof}
  First, we show an auxiliary claim.
  \begin{claim}\label{claim:cor-khintchine}
    We have,
    $$
    \Bigl\|\sum_i X_i - S\Bigr\|_{L_p} \lesssim \sqrt{p} \cdot \biggl\|\Bigl(\sum_i X_i^2\Bigr)^{1/2}\biggr\|_{L_p}.
    $$
  \end{claim}
  \begin{proof}
    Write, using that $S = \mathbb{E}\Bigl[\sum_i X_i \mathrel{\Big|} Y \Bigr]$ almost surely,
    $$
      \Bigl\|\sum_i X_i - S\Bigr\|_{L_p} = \mathbb{E}\biggl[\Bigl|\sum_i X_i - \mathbb{E}\Bigl[\sum_i X_i \mathrel{\Big|} Y\Bigr]\Bigr|^p\biggr]^{1/p}
       = \mathbb{E} \biggl[\mathbb{E} \biggl[\Bigl|\sum_i \bigl(X_i - \mathbb{E}[X_i \mathrel{|} Y]\bigr)\Bigr|^p \mathrel{\bigg|} Y \biggr]\biggr]^{1/p}.
    $$
    Let $\sigma_1, \sigma_2, \ldots, \sigma_n$ be independent uniform $\{\pm 1\}$-valued random variables that are independent
    from $X_i$'s and $Y$. Then, by the symmetrization lemma for \emph{independent} random variables (see e.g., Lemma~6.3 in~\cite{ledoux2013probability}
    or Lemma 6.4.2 in \cite{Versh18})
    applied to random variables $X_i$ conditioned on $Y$, we have
    $$\mathbb{E} \biggl[\Bigl|\sum_i \bigl(X_i - \mathbb{E}[X_i \mathrel{|} Y]\bigr)\Bigr|^p \mathrel{\bigg|} Y \biggr]^{1/p}
    \leq 2\, \mathbb{E}\biggl[\Bigl|\sum_i \sigma_i X_i\Bigr|^p \mathrel{\bigg|} Y\biggr]^{1/p}.$$
    Thus,
    \begin{align*}
     \Bigl\|\sum_i X_i - S\Bigr\|_{L_p} &\leq 2 \mathbb{E}\biggl[\mathbb{E}\biggl[\Bigl|\sum_i \sigma_i X_i\Bigr|^p \mathrel{\bigg|} Y\biggr]\biggr]^{1/p}
       = 2\,  \mathbb{E}\biggl[\Bigl|\sum_i \sigma_i X_i\Bigr|^p\biggr]^{1/p}\\
      & = 2\, \mathbb{E}\biggl[\mathbb{E}\biggl[\Bigl|\sum_i \sigma_i X_i\Bigr|^p \mathrel{\bigg|} X_1, \ldots, X_n \biggr]\biggr]^{1/p}.
    \end{align*}
    By the Khintchine inequality (see e.g., Lemma~4.1 in~\cite{ledoux2013probability}), we have
    $$\mathbb{E}\biggl[\Bigl|\sum_i \sigma_i X_i\Bigr|^p \mathrel{\bigg|} X_1, \ldots, X_n \biggr]\lesssim p^{p/2} \cdot \Bigl(\sum_i X_i^2\Bigr)^{p/2}.$$
    Thus,
    $$\Bigl\|\sum_i X_i - S\Bigr\|_{L_p} \lesssim \sqrt{p} \cdot \mathbb{E}\biggl[\Bigl(\sum_i X_i^2\Bigr)^{p/2}\biggr]^{1/p}
      = \sqrt{p} \cdot \biggl\|\Bigl(\sum_i X_i^2\Bigr)^{1/2}\biggr\|_{L_p}.$$
  \end{proof}

  We now return to the proof of Theorem~\ref{moment_bound}. By Claim~\ref{claim:cor-khintchine} and the Cauchy--Schwartz inequality, we have
  \begin{align*}
  \Bigl\|\sum_i X_i - S\Bigr\|_{L_p} &\lesssim \sqrt{p} \cdot \biggl\|\Bigl(\sum_i X_i^2\Bigr)^{1/2}\biggr\|_{L_p}
  \leq \sqrt{p} \cdot \biggl\|\bigl(\max_i X_i\bigr)^{1/2} \cdot \Bigl(\sum_i X_i\Bigr)^{1/2}\biggr\|_{L_p}\\
  &\leq \underbrace{\sqrt{p} \cdot \Bigl\|\max_i X_i\Bigr\|_{L_p}^{1/2}}_{\sqrt{K}} \cdot \Bigl\|\sum_i X_i\Bigr\|_{L_p}^{1/2}
  =\sqrt{K}\cdot \Bigl\|\sum_i X_i\Bigr\|_{L_p}^{1/2}.
  \end{align*}
  Observe that $d(X,Y) = \|X-Y\|_{L_p}$ is a metric, and, consequently, $\sqrt{d(X,Y)} = \|X-Y\|_{L_p}^{1/2}$
  is also a metric. By the triangle inequality for $\|\cdot\|_{L_p}^{1/2}$, we have
  $$
    \sqrt{K} \cdot \Bigl\|\sum_i X_i\Bigr\|_{L_p}^{1/2}
    \leq \sqrt{K} \cdot \biggl(\|S\|_{L_p}^{1/2} + \Bigl\|\sum_i X_i - S\Bigr\|_{L_p}^{1/2}\biggr)
    \leq \sqrt{K} \cdot \biggl(\sqrt{S} + \Bigl\|\sum_i X_i - S\Bigr\|_{L_p}^{1/2}\biggr).
  $$
Therefore, we obtained the following inequality:
$$\Bigl\|\sum_i X_i - S\Bigr\|_{L_p} \lesssim \sqrt{K} \cdot \biggl(\sqrt{S} + \Bigl\|\sum_i X_i - S\Bigr\|_{L_p}^{1/2}\biggr).$$
It implies that
$\Bigl\|\sum_i X_i - S\Bigr\|_{L_p} \lesssim \sqrt{KS}$ or
$\Bigl\|\sum_i X_i - S\Bigr\|_{L_p} \lesssim \sqrt{K} \cdot \Bigl\|\sum_i X_i - S\Bigr\|_{L_p}^{1/2}$.
The desired inequality follows.
\end{proof}

The next theorem bounds higher moments of $\bigl\|\Pi x\bigr\|_2^2$. It is also
implicitly proved by~Cohen, Nelson, and Woodruff~\cite{cohen2015optimal}.

\begin{theorem}[Implicit in Cohen, Nelson, and Woodruff~\cite{cohen2015optimal}]
  \label{fast_jl_moment_bound}
  Fix $p \geq 1$ and $x \in \mathbb{R}^m$ with $\|x\|_2 = 1$. Denote $K = \frac{p^2 + p \log d}{d}$.
  Then,
  $$
  \Bigl\|\bigl\|\Pi x\bigr\|_2^2 - 1\Bigr\|_{L_p} \lesssim K + \sqrt{K}.
  $$
\end{theorem}
\begin{proof}
  Denote $z= HDx$ and $\eta_i = \ONE\{S_{ii} \ne 0\}$.
  Then, we have $\|z\|_2 = 1$ and $\|\Pi x\|_2^2 = \frac{m}{d} \cdot \sum_i \eta_i z_i^2$.
  We apply Theorem~\ref{moment_bound} to $X_i = \frac{m}{d} \cdot \eta_i z_i^2$,
  $Y = z$, and $S = 1$ (all the conditions are straightforward to check).
  Now, to obtain the required bound, it suffices to show that
  $$
  p \cdot \|\max_i X_i\|_{L_p} \equiv \frac{pm}{d} \cdot \bigl\|\max_i \eta_i z_i^2\bigr\|_{L_p} \lesssim \frac{p^2 + p \log d}{d}.
  $$
  Denote $q = \max\{p, \log d\}$. Using the Khintchine inequality (recall that $D_{jj}$ is a uniform $\pm1$-valued random variable), we get that for every $i$
\begin{align}\label{ineq:z-i}
  \E{z_i^{2q}}^{\frac{1}{q}} &= \E{(HDx)_i^{2q}}^{\frac{1}{q}} = \Biggl(\mathrm{E}\biggl[\Bigl(\sum_{j=1}^m H_{ij} x_j D_{jj}\Bigr)^{2q}\biggr]^{\frac{1}{2q}}\Biggr)^2 \\
  &\lesssim q \cdot \sum_{j=1}^m H_{ij}^2 x_j^2 = q \cdot \sum_{j=1}^m x_j^2/m = q/m.\notag
\end{align}

  Then, we have
  \begin{align*}
    \frac{pm}{d} \cdot \bigl\|\max_i \eta_i z_i^2 \bigr\|_{L_p}
    & \leq \frac{pm}{d} \cdot \bigl\|\max_i \eta_i z_i^2 \bigr\|_{L_q} \\
    & = \frac{pm}{d} \cdot \mathbb{E}\Bigl[\max_i \eta_i z_i^{2q}\Bigr]^{1/q} \\
    & \leq \frac{pm}{d} \cdot \mathbb{E}\Bigl[\sum_i \eta_i z_i^{2q}\Bigr]^{1/q} \\
    & \leq \frac{pm}{d} \cdot \left(m \cdot \max_i \mathbb{E}\bigl[\eta_i z_i^{2q}\bigr]\right)^{1/q} \\
    & \lesssim \frac{pm}{d} \cdot \max_i \mathbb{E}\bigl[z_i^{2q}\bigr]^{1/q} \\
    & \lesssim \frac{pq}{d} \\
    & \leq \frac{p^2 + p \log d}{d},
  \end{align*}
  where the fifth step follows from identity
  $m \cdot \max_i \mathbb{E}\bigl[\eta_i z_i^{2q}\bigr] = d \cdot \max_i \mathbb{E}\bigl[z_i^{2q}\bigr]$
  and bound $q \geq \log d$, the sixth step
  follows from~(\ref{ineq:z-i}),
  and the seventh step follows from the definition of $q$.
\end{proof}

\begin{corollary}
  \label{fast_jl_tail}
  Fix $t > C_0 \sqrt{\frac{\log d}{d}}$, where $C_0$ is a sufficiently large constant. Then, for every unit vector $x \in \mathbb{R}^m$, we have
  \begin{enumerate}
  \item If $t \leq \frac{\log d}{\sqrt{d}}$,
    then
    $$
    \mathrm{Pr}\left[\left|\|\Pi x\|_2^2 - 1\right| \geq t\right] \leq e^{-\Omega\left(\frac{t^2 d}{\log d}\right)}
    $$
  \item If $\frac{\log d}{\sqrt{d}} \leq t \leq 1$,
    then
    $$
    \mathrm{Pr}\left[\left|\|\Pi x\|_2^2 - 1\right| \geq t\right] \leq e^{-\Omega\left(t \sqrt{d}\right)}
    $$
  \item If $t \geq 1$,
    then
    $$
    \mathrm{Pr}\left[\left|\|\Pi x\|_2^2 - 1\right| \geq t\right] \leq e^{-\Omega\left(\sqrt{td}\right)}
    $$
  \end{enumerate}
\end{corollary}
\begin{proof}
First, let us define $p^*$ so that
$$\bigl\|\|\Pi x\|_2^2 - 1\bigr\|_{L_{p^*}} \leq 0.9 t.$$
It follows from Theorem~\ref{fast_jl_moment_bound} that we can define $p^*$ as follows
$$
  p^* = c \begin{cases}
    \frac{t^2 d}{\log d},& \mbox{if $t \leq \frac{\log d}{\sqrt{d}}$;}\\
    t \sqrt{d},& \mbox{if $\frac{\log d}{\sqrt{d}} \leq t \leq 1$;}\\
    \sqrt{td},& \mbox{if $t \geq 1$,}
  \end{cases}
$$
where $c > 0$ is a sufficiently small constant. We let $C_0$ be large enough so that $p^* \geq 1$.

Then by Markov's inequality, we have,
$$
  \mathrm{Pr}\bigl[\bigl|\|\Pi x\|_2^2 - 1\bigr| \geq t\bigr] \leq 0.9^{p^*} \leq e^{-\Omega(p^*)}.
$$
Plugging in the formula for $p^*$, we get the desired bounds.
\end{proof}
\begin{corollary}
  \label{fast_jl_num_rows}
  For $0 < \varepsilon, \delta < 1/3$,
  the map $\pi_{m, d}$ is an $(\varepsilon, \delta)$-dimension reduction if
\begin{equation}\label{eq:bound-on-d}
  d \geq C_1 \, \frac{\log(\nicefrac{1}{\delta})\log(\nicefrac{1}{\delta\varepsilon})}{\varepsilon^2},
\end{equation}
for a sufficiently large absolute constant $C_1 > 0$.
\end{corollary}
\begin{proof}
  We show that for every unit vector $x$,
  $$P \equiv \mathrm{Pr}\left[\left|\|\Pi x\|_2^2 - 1\right| \geq \varepsilon\right] \leq \delta,$$
  and thus $\Pi$ is a $(\varepsilon, \delta)$-dimension reduction.
  To this end, we apply Corollary~\ref{fast_jl_tail} with $t = \varepsilon$. In order to apply it, we need to check that the condition  $\varepsilon \geq  C_0\sqrt{\frac{\log d}{d}}$ holds.

  We choose $C_1$ large enough so that, in particular, $C_1 \geq C_2^2$, where $C_2 = \max(C_0^2, 3)$. Then,
  (\ref{eq:bound-on-d}) implies that $d \geq d_0 \equiv  C_2^2 \frac{\log \nicefrac{1}{\varepsilon}}{\varepsilon^2}$.
  Since the function $x\mapsto \sqrt{\frac{\log x}{x}}$
  is decreasing for $x\geq e^2$ and $d_0 \geq e$, we have
  $$C_0\sqrt{\frac{\log d}{d}} \leq C_0 \sqrt{\frac{\log d_0}{d_0}} \leq \frac{C_0}{C_2} \varepsilon
  \sqrt{\frac{2\log \nicefrac{1}{\varepsilon} + 2\log C_2 + \log\log \nicefrac{1}{\varepsilon}}{\log \nicefrac{1}{\varepsilon}} } \leq \frac{C_0 \sqrt{3 + 2\log C_2}}{C_2}  \varepsilon \leq \varepsilon.$$
  Thus, we can indeed apply Corollary~\ref{fast_jl_tail}.

  It remains to verify that the upper bound that Corollary~\ref{fast_jl_tail} gives on $P$ is less than or equal to $\delta$. We note that either Case 1 or Case 2 in Corollary~\ref{fast_jl_tail} takes place, since $\varepsilon \leq 1/3 < 1$. Thus, it suffices to show that both upper bounds
  $e^{-\Omega\left(\frac{\varepsilon^2 d}{\log d}\right)}$ and $e^{-\Omega\left(\varepsilon \sqrt{d}\right)}$
  are less than $\delta$ (if we choose $C_1$ sufficiently large).

  Note that
  $$\frac{\varepsilon^2 d}{\log d} = \Omega\left(\frac{C_1\log \nicefrac{1}{\delta}\, (\log \nicefrac{1}{\varepsilon} + \log\nicefrac{1}{\delta})}{\log C_1 +
  \log \nicefrac{1}{\varepsilon} + \log\log \nicefrac{1}{\delta}}\right) \geq \Omega\left(\frac{C_1}{1+ \log C_1}\right) \log \nicefrac{1}{\delta}.$$
  Thus, by choosing $C_1$ large enough, we have that the upper bound $e^{-\Omega\left(\frac{\varepsilon^2 d}{\log d}\right)}$ on $P$ is less than $\delta$.
  We also have, $\varepsilon \sqrt{d} \geq \Omega(\sqrt{C_1}) \log\nicefrac{1}{\delta}$. Again, if $C_1$ is large enough, we have that the upper bound $e^{-\Omega\left(\varepsilon \sqrt{d}\right)}$ on $P$ is less than $\delta$.
\end{proof}

\begin{lemma}
  \label{fast_jl_p_tail}
Fix $p \geq 1$.
Assume that $\delta \leq \varepsilon$, $d \gg p^2$, and $C_1$ in (\ref{eq:bound-on-d}) is large enough.
For every unit vector $x \in \mathbb{R}^m$, we have
$$
\mathbb{E}\left[\ONE\{\|\Pi x\|_2 \geq 1 + \varepsilon\} \cdot \left(\|\Pi x\|_2^p - (1 + \varepsilon)^p\right)\right] \leq e^{-\Omega(\varepsilon \sqrt{d})}.
$$
\end{lemma}
\begin{proof}
The proof will be similar to that of Lemma~C.1. We have,
\begin{equation}\label{eq:bound-on-tail-FJL}
\mathbb{E}\left[\ONE\{\|\Pi x\|_2 \geq 1 + \varepsilon\} \cdot \left(\|\Pi x\|_2^p - (1 + \varepsilon)^p\right)\right]
 \leq \int_{\varepsilon}^{\infty} p (1 + t)^{p-1} \cdot \mathrm{Pr}[\|\Pi x\|_2 \geq 1 + t] \, dt.
\end{equation}
We bound the probability $\mathrm{Pr}[\|\Pi x\|_2 \geq 1 + t]$ for $t > \varepsilon$ using Corollary~\ref{fast_jl_tail} as follows. Since $\delta \leq \varepsilon$, we have
that $d \geq \frac{C_1 \log^2 \nicefrac{1}{\varepsilon}}{\varepsilon^2}$ and therefore
$$
\frac{\log d}{\sqrt{d}} \lesssim \frac{\varepsilon \log \nicefrac{1}{\varepsilon}}{\sqrt{C_1}\log \nicefrac{1}{\varepsilon}} =\frac{\varepsilon}{\sqrt{C_1}}.$$
Thus, if $C_1$ is sufficiently large, we have $\varepsilon > \frac{\log d}{\sqrt{d}}$; that is, either Case 2 or 3 in Corollary~\ref{fast_jl_tail} holds. We conclude that
$$\mathrm{Pr}[\|\Pi x\|_2 \geq 1 + t] \leq e^{-\Omega(\min(t,\sqrt{t})\cdot \sqrt{d})}.$$
Now let us go back to inequality~(\ref{eq:bound-on-tail-FJL}). We have,
\begin{align*}
\int_{\varepsilon}^{\infty} p (1 + t)^{p-1} &\cdot \mathrm{Pr}[\|\Pi x\|_2 \geq 1 + t] \, dt
     \leq \int_{\varepsilon}^{\infty} p (1 + t)^{p-1} \cdot e^{-\Omega(\min(t,\sqrt{t}) \cdot \sqrt{d})} \, dt\\
    & \leq p\int_{\varepsilon}^{\infty} \left((1 + t)^{p-1} \cdot e^{-\Omega(\min(t,\sqrt{t}) \cdot \sqrt{d})}\right)  e^{-\Omega(\min(t,\sqrt{t})\cdot \sqrt{d})} \, dt \\
    & \leq p\int_{\varepsilon}^{\infty} \left(e^{(p-1) \log (1+t)-\Omega(\min(t,\sqrt{t}) \cdot \sqrt{d})}\right)  e^{-\Omega(\min(t,\sqrt{t}) \cdot \sqrt{d})} \, dt
  \end{align*}
Observe that when $d \gg p$, the expression in the parentheses is less than 1. We get,
$$
\mathbb{E}\left[\ONE\{\|\Pi x\|_2 \geq 1 + \varepsilon\} \cdot \left(\|\Pi x\|_2^p - (1 + \varepsilon)^p\right)\right]\leq
p\int_{\varepsilon}^{\infty} e^{-\Omega(\min(t,\sqrt{t}) \cdot \sqrt{d})} \, dt.
$$
We simplify this bound using that $\int_1^{\infty} e^{-c\sqrt{t}}\,dt = \frac{2 (1 + c)e^{-c}}{c^2}$
and $\int_{\varepsilon}^{1} e^{-ct}\,dt = \frac{e^{-c \varepsilon }-e^{-c}}{c}$. We get
$$
\mathbb{E}\left[\ONE\{\|\Pi x\|_2 \geq 1 + \varepsilon\} \cdot \left(\|\Pi x\|_2^p - (1 + \varepsilon)^p\right)\right]\leq O\left(\frac{p}{\sqrt{d}}\right)  e^{-\Omega(\sqrt{d} \varepsilon)}
\leq e^{-\Omega(\varepsilon\sqrt{d})},$$
since $d \gg p^2$.
\end{proof}

Thus, if $d$ satisfies inequality~(\ref{eq:bound-on-d}),
$d\gg p^2$, and $\delta < \varepsilon$, then the map $\pi_{m, d}$
is a $(\varepsilon, \delta, \rho)$-dimension reduction with $\rho = e^{-\Omega(\varepsilon \sqrt{d})}$.
Finally, we conclude the proof of Theorem~\ref{fast_jl_main_thm}. We repeat the proof
of Theorem~\ref{thm:main} using the new bounds on $d$ and $\rho$.
Instead of~(\ref{eq:cond-on-d}), we obtain the following bound on $d$:
\begin{equation}
\label{new_d_bound}
d = C \cdot \left(\frac{\log^2 \frac{k}{\alpha} + p^2 \log^2 \frac{1}{\varepsilon} + p^4}{\varepsilon^2}\right),
\end{equation}
where $C$ is a sufficiently large absolute constant.
Indeed, using the notation from the proof of Theorem~\ref{thm:main}, we have:
\begin{equation}
  \label{theta_stupid_bound}
\theta \geq \frac{\alpha \cdot \varepsilon^{O(p)}}{e^{O(p^2)} \cdot k}.
\end{equation}
And we need to set $\delta$ and $d$ such that $\delta \ll \theta^7 / k$,
$\delta \ll \alpha / k^2$ and $\rho \leq \theta$.
We can set
$$
\delta = c \cdot \frac{\alpha^{O(1)} \cdot \varepsilon^{O(p)}}{e^{O(p^2)} \cdot k^{O(1)}},
$$
where $c > 0$ is a sufficiently small absolute constant,
such that the first two conditions hold. This together with Corollary~\ref{fast_jl_num_rows} implies that
as long as $d$ satisfies the following bound,
\begin{equation}
  \label{fast_jl_bound_calculation}
d \gtrsim \frac{\left(p^2 + \log(k / \alpha) + p \cdot \log(1 / \varepsilon)\right)^2}{\varepsilon^2}
\approx \frac{p^4 + \log^2(k / \alpha) + p^2 \cdot \log^2(1 / \varepsilon)}{\varepsilon^2}
\end{equation}
map $\pi_{m,d}$ is an $(\varepsilon, \delta)$-dimension reduction.
Now let us check that for this value of $d$, the map is also $(\varepsilon, \delta, \rho)$-dimension
reduction for $\rho \leq \theta$. Indeed, since $\delta \leq \varepsilon$ and $d \gg p^2$, we can use
Lemma~\ref{fast_jl_p_tail}, which implies that $\rho = e^{-\Omega(\varepsilon \sqrt{d})}$.
Combining this bound with~(\ref{theta_stupid_bound}), we can check that as long as $d$ satisfies~(\ref{fast_jl_bound_calculation}), map
$\pi_{m, d}$ is an $(\varepsilon, \delta, \rho)$-dimension reduction
for $\rho \leq \theta$.

Finally, we plug~(\ref{fast_jl_bound_calculation}) in the proof of Theorem~\ref{thm:main-alt}
and obtain the final bound~(\ref{fast_jl_bound}).

\end{document}